\newcommand{\name}{{iQCT}}
\newcommand{\Tr}[0]{\mathrm{Tr}}
\newtheorem{theorem}{Theorem}
\begin{document}

\preprint{APS/123-QED}

\title{Quantum Network Tomography via Learning Isometries on Stiefel Manifold}

\author{Ze-Tong Li}
\affiliation{State Key Laboratory of Millimeter Waves, Southeast University, Nanjing 210096, China.}
\affiliation{%
Frontiers Science Center for Mobile Information Communication and Security, Southeast University, Nanjing 210096, China.
}%
\affiliation{%
Purple Mountain Lab, Nanjing 211111, China.}%

\author{Xin-Lin He}%
\author{Cong-Cong Zheng}%
\affiliation{State Key Laboratory of Millimeter Waves, Southeast University, Nanjing 210096, China.}
\affiliation{%
Frontiers Science Center for Mobile Information Communication and Security, Southeast University, Nanjing 210096, China.
}%

\author{Yu-Qian Dong}
\affiliation{Yangtze Delta Region Industrial Innovation Center of Quantum and Information Technology, Suzhou 215100, China.}

\author{Tian Luan}
\affiliation{Yangtze Delta Region Industrial Innovation Center of Quantum and Information Technology, Suzhou 215100, China.}

\author{Xu-Tao Yu}
\email{yuxutao@seu.edu.cn}
\affiliation{State Key Laboratory of Millimeter Waves, Southeast University, Nanjing 210096, China.}
\affiliation{%
Frontiers Science Center for Mobile Information Communication and Security, Southeast University, Nanjing 210096, China.
}%
\affiliation{%
Purple Mountain Lab, Nanjing 211111, China.}%

\author{Zai-Chen Zhang}
\email{zczhang@seu.edu.cn}
\affiliation{%
National Mobile Communications Research Laboratory, Southeast University, Nanjing 210096, China.
}%
\affiliation{%
Frontiers Science Center for Mobile Information Communication and Security, Southeast University, Nanjing 210096, China.
}%
\affiliation{%
Purple Mountain Lab, Nanjing 211111, China.}%

\date{\today}

\begin{abstract}
  Explicit mathematical reconstructions of quantum networks play a significant role in developing quantum information science. However, tremendous parameter requirements and physical constraint implementations have become computationally non-ignorable encumbrances. In this work, we propose an efficient method for quantum network tomography by learning isometries on the Stiefel manifold. Tasks of reconstructing quantum networks are tackled by solving a series of unconstrained optimization problems with significantly fewer parameters. The stepwise isometry estimation shows the capability for providing information of the truncated quantum network while processing the tomography. Remarkably, this method enables the dimension-reduced quantum network tomography by reducing the ancillary dimensions of isometries with bounded error. As a result, our proposed method exhibits high accuracy and efficiency.
\end{abstract}

\maketitle


\section{\label{sec:introduction}Introduction}
Quantum networks are extremely important in quantum communication, computation, metrology \cite{wei2022Real,maniscalco2007Entanglement,caruso2014Quantum,kos2023Circuits,wolf2008Assessing,bylicka2014NonMarkovianity,paulson2021Hierarchy,zhao2022Quantum,sarovar2020Detecting,wei2022Hamiltonian,muller2019Understanding,wilen2021Correlated,dial2016Bulk,brownnutt2015Iontrap}. A quantum network, which can be considered as a multi-time-step combination of elementary quantum circuits \cite{chiribella2009Theoretical}, has capabilities of performing complex tasks that require multiple input-output states at different time steps, as a non-Markovian quantum process \cite{gutoski2007General}. Furthermore, the quantum network is competent to model non-Markovian quantum noise \cite{wei2022Hamiltonian,muller2019Understanding,wilen2021Correlated,dial2016Bulk,brownnutt2015Iontrap,parrado-rodriguez2021Crosstalk,kuhlmann2013Charge,yoneda2023Noisecorrelation,rojas-arias2023Spatial} resulting from indispensable system-environment correlations, and promotes development of clean quantum computers \cite{bonillaataides2021XZZX,tuckett2018Ultrahigh,harper2023Learning,nautrup2019Optimizing,farrelly2021Tensornetwork,suzuki2022Quantum,wang2023DGR}.

Explicit mathematical reconstructions of quantum networks play a significant role in the development of quantum information science. For example, quantum computer manufacturers can leverage the information of non-Markovian quantum noise to enhance the reliability of quantum computers. A prevalent way to model a quantum network is the quantum comb \cite{chiribella2009Theoretical,pollock2018NonMarkovian,milz2021Quantum}. As shown in Fig.~\ref{fig:quantum_comb}, an $N$-time-step quantum comb constructs a completely positive (CP) map from $N$ input states to $N$ output states, labeled by even and odd numbers, respectively, with causality that later input systems cannot influence previous output systems. Then, the quantum comb can be represented by a matrix with massive entries, such as the Choi-Jamiołkowski isomorphism (CJI), a.k.a. Choi states, which is well established in quantum-information theory \cite{poyatos1997Complete}. Several methods for quantum comb tomography (QCT) have been proposed and exhibited outstanding effectiveness. \cite{chiribella2009Theoretical,milz2018Reconstructing,pollock2018NonMarkovian,milz2021Quantum,white2022NonMarkovian,white2023Filtering,white2020Demonstration,white2021Manybody}. 

\begin{figure}[t]
  \includegraphics[width=0.45\textwidth]{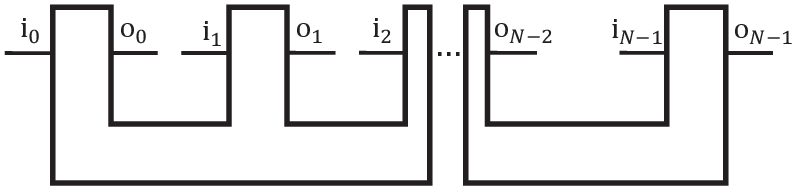}
  \caption{\label{fig:quantum_comb} A quantum comb with $N$ time steps. Wires labeled by $\mathtt{i}_k$ and $\mathtt{o}_k$ represent the input and output systems, respectively, at time step $k$, $k=0,1,\dots,N-1$. Causality of the quantum comb indicates that the information flows along with the time step, which implies that the input system at time step $l$ cannot influence the output system $m$ if $l>m$.}
\end{figure}

However, there still exist two challenges in performing QCT. First, recent QCT methods require multitudinous parameters to completely represent an arbitrary quantum network, that is, the productions of squared dimensions of all input-output systems. The tremendous and increasing parameter requirement w.r.t. the time step and dimensions of quantum states becomes a computationally non-ignorable encumbrance. Second, the physical properties of a quantum comb requires implementations of CP and causal (CPC) constraints in QCT. This leads to introducing massive equality constraints to a positive semidefinite complex matrix \cite{white2022NonMarkovian}.

These two challenges result in the QCT being intractable with increasing time steps and system dimensions. Efficient methods with capabilities of tackling massive parameters and constraints are urgently required, which directly motivates this work.

In this work, we propose an efficient isometry-based QCT (\name) on the Stiefel manifold. The {\name} equivalently implements the low-rank estimation to the process tensor of the quantum comb by a list of isometries with significantly fewer parameters and bounded error. The original QCT task is transformed into solving $N$ unconstrained optimization problems on the Stiefel manifold \cite{boumal2023Introduction,ahmed2023Gradientdescent,rapcsak2002Minimization,li2020Efficient}. Hence, the CPC properties are inherently satisfied without introducing any constraint. The stepwise optimization determines one isometry at each time step. Hence, our method can provide information about the truncated quantum comb while processing the tomography. As a result, our proposed method exhibits high accuracy and efficiency. Remarkably, the {\name} especially suitable for the cases when the quantum comb has mild time correlations such that the purity of the Choi state is high, or the dimension of the environment is low and known to the experimentor.

\section{Framework of Isometry based QCT}

A quantum comb $\mathcal{C}^{(N)}$ with $N$ time steps as shown in Fig.~\ref{fig:quantum_comb}, that represents an $N$-time-step quantum network, maps $N$ input systems from $\mathrm{Lin}(\mathcal{H}_{\mathtt{i}_k})$ to $N$ output systems from $\mathrm{Lin}(\mathcal{H}_{\mathtt{o}_k})$, $k=0,1,\dots,N-1$, where $\mathcal{H}_x$ is a $d_x$-dimensional Hilbert space and $\mathrm{Lin}(\mathcal{H}_{x})$ is the space of linear operator on $\mathcal{H}_{x}$. Let $\mathcal{H}_\mathrm{in}^{(N)} := \bigotimes_{k=0}^{N-1}\mathcal{H}_{\mathtt{i}_k}$ and $\mathcal{H}_\mathrm{out}^{(N)}  := \bigotimes_{k=0}^{N-1}\mathcal{H}_{\mathtt{o}_k}$. 

Based on the Stinespring dilation theorem \cite{gutoski2007General,chiribella2009Theoretical}, the quantum comb $\mathcal{C}^{(N)}$ can be implemented by a sequence of isometries 
\begin{align}\label{eq:comb_isometry_repr}
  \mathcal{C}^{(N)}(\rho) = \mathrm{Tr}_{A_N}\![V^{(N-1)}\dots V^{(0)}\rho V^{(0)\dagger}\dots V^{(N-1)\dagger}],
\end{align}
where $\rho \in \mathrm{Lin}(\mathcal{H}_\mathrm{in})$, $\mathcal{C}^{(N)}(\rho) \in \mathrm{Lin}(\mathcal{H}_\mathrm{out})$, $V^{(k)}$ is an isometry from $\mathcal{H}_{\mathtt{i}_k}\otimes\mathcal{H}_{A_k}$ to $\mathcal{H}_{\mathtt{o}_k}\otimes\mathcal{H}_{A_{k+1}}$, and $\mathcal{H}_{A_k}$ is the $d_{A_k}$ dimensional Hilbert ancillary space of the output of isometry $V^{(k-1)}$. Note that we omit the identity operators on the Hilbert spaces in \eqref{eq:comb_isometry_repr} henceforth.

Then, a quantum comb can be reconstructed with bounded Hilbert-Schmidt (HS) distance, as shown in Theorem~\ref{thm:bound}:
\begin{theorem}\label{thm:bound}
  Let $\Upsilon$ denote the Choi state of an $N$-time-step quantum comb $\mathcal{C}^{(N)}$. Suppose $\mathcal{P}=\mathrm{Tr}({\Upsilon}^2)/(\Tr\Upsilon)^2$ is the purity of normalized $\Upsilon$, there exists a list of isometries $\mathbb{V}:=[V^{(t)}:= \mathcal{H}_{\mathtt{i}_{t}}\otimes\mathcal{H}_{A_t} \to \mathcal{H}_{\mathtt{o}_t}\otimes\mathcal{H}_{A_{t+1}}]_{t=0}^{N-1}$ with $O(\max_{0\leq t< N}(d_{\mathtt{i}_{t}}d_{\mathtt{o_{t}}}) R^2)$ entries that implements the Choi state $\Upsilon^\prime$, such that the Hilbert-Schmidt distance
  \begin{align}
    D_{HS} \leq (\Tr\Upsilon)^2[\mathcal{P}-R\beta_{+}^2(K) + \frac{1}{R^2}\left( 1-R \beta_{+}(K)\right)^2],
  \end{align}
  when $1 \leq R < K-1$, and 
  \begin{align}
    D_{HS} \leq \max_{l\ge R} (\Tr\Upsilon)^2(l-R)\left[1+\frac{l-R}{R^2}\right] \beta^2_{-}(l),
  \end{align}
  when $K \leq R < d_{\rho}$, where $D_{HS}(\Upsilon^\prime,\Upsilon) = \Tr[(\Upsilon^\prime-\Upsilon)^2]$, $\beta_{\pm}(x)=\frac{1}{x} \pm \sqrt{ \frac{1}{x(x-1)}\left( \mathcal{P}-\frac{1}{x} \right) }$, $K = \lceil \frac{1}{P} \rceil$.
\end{theorem}
The proof of Theorem~\ref{thm:bound} is refered to the supplementary material. This provides a suggestion of ancillary dimensions that $\max_{k=1,\dots,N}{d_{A_k}} = R$ by assessing the maximal Hilbert-Schmidt distance with given purity of the Choi state. Note that the estimation of the purity of the Choi state can be efficiently tackled by classic shadow, and even does not require high accuracy. Remarkably, Theorem~\ref{thm:bound} also indicates that the quantum comb $\mathcal{C}^{(N)}$ can be reconstructed by significantly less complex parameters with low Hilber-Schmidt distance, when the non-Markovianity is mild such that the purity of the Choi state is high, such as the cross-talk introduced by the imperfect signal line to control the double-qubit gate between the system and an idle qubit on the chip.

We assume that the target quantum network is invariable during the experiments and can be accessed multiply. To perform QCT on the quantum network, the experimenter prepares known tomographically complete state sets $\Gamma^{(k)}:=\{\rho^{(k)}_{i}\}$, $i=0,\dots,d_{\mathtt{i}_k}^2-1$, and measurement sets $\Xi^{(k)}:=\{E^{(k)}_{j}\}$, $j=0,\dots,d_{\mathtt{o}_k}^2-1$, that span $\mathrm{Lin}(\mathcal{H}_{\mathtt{i}_k})$ and $\mathrm{Lin}(\mathcal{H}_{\mathtt{o}_k})$, respectively, for $k=0,1,\dots,N-1$.

\begin{figure}[t]
  \includegraphics[width=0.48\textwidth]{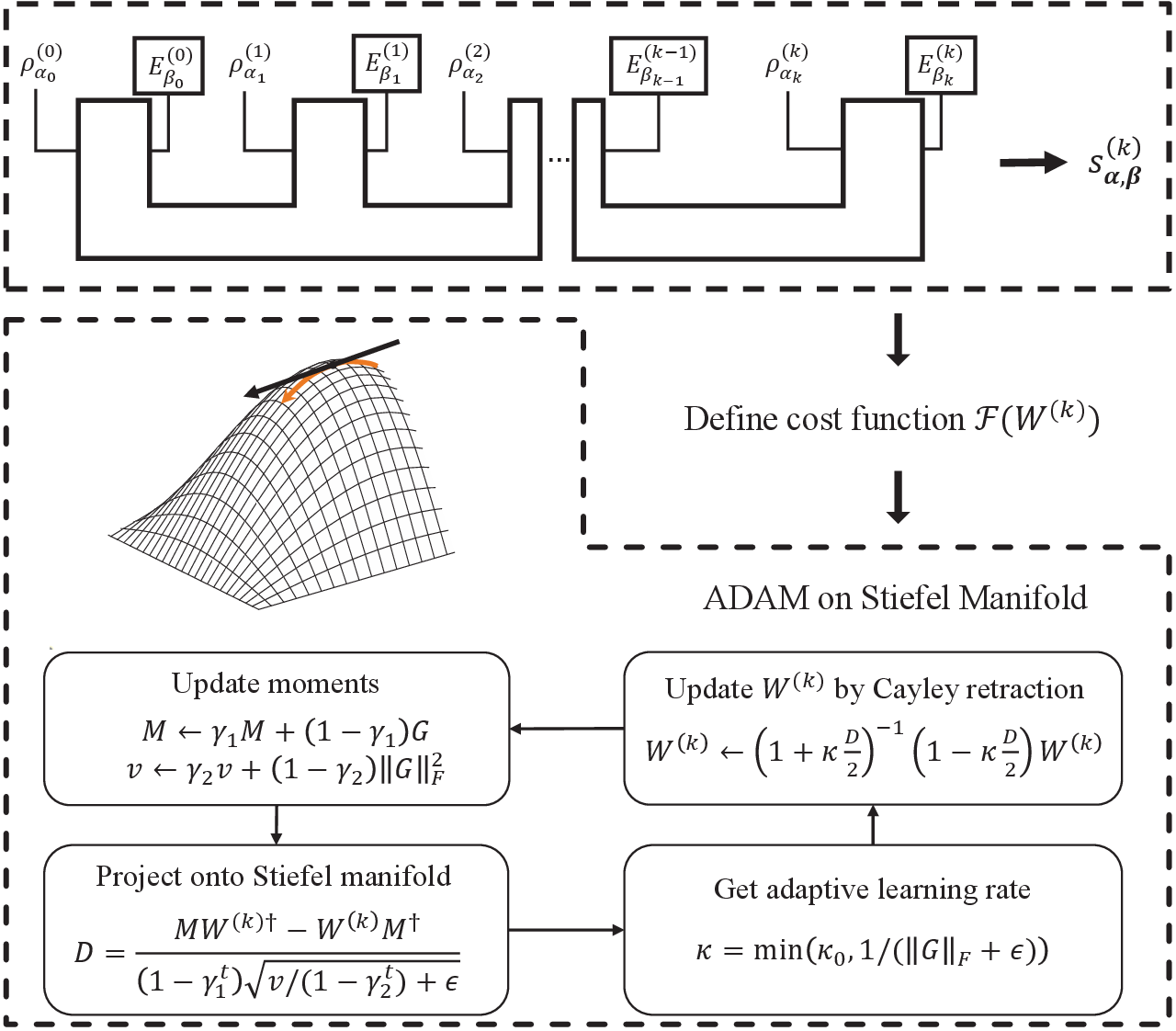}
  \caption{\label{fig:work_flow} Workflow for estimating $V^{(k)}$ in the target $N$-time-step quantum network. The experimenter tests the truncated quantum network with time steps $t=0,\dots,k$, $k\le N-1$, from the target quantum network, as shown at the top of the figure. Then, the cost function is defined by measurement results $s_{\bm{\alpha},\bm{\beta}}^{(k)}$ w.r.t. $W^{(k)}$. Finally, we use the ADAM on the Stiefel manifold to determine $V^{(k)} = \arg\min_{W^{(k)}\in \mathrm{St}^{(k)}}\mathcal{F}(W^{(k)})$.}
\end{figure}

The causality indicates that the input systems at later time steps cannot influence previous output systems, which enables the stepwise optimization in the \name. See supplementary material for details. At time step $k$, only the isometry $V^{(k)}$ is reconstructed with the known $V^{(t)}$, $t<k$. The workflow for estimating $V^{(k)}$ is summarized in Fig.~\ref{fig:work_flow}. The experimenter conducts experiments that combine the input states $\{\rho_{\alpha_0}^{(0)},\dots,\rho_{\alpha_k}^{(k)}\}$ and measurements on output states $\{E_{\beta_0}^{(0)},\dots,E_{\beta_k}^{(k)}\}$ and records the results $s^{(k)}_{\bm{\alpha},\bm{\beta}}$, where $\bm{\alpha} := [\alpha_0, \dots, \alpha_k]$, $\bm{\beta} := [\beta_0, \dots, \beta_k]$. The criteria for selecting $\bm{\alpha}$ and $\bm{\beta}$ are that $\{\eta^{(k-1)}_{\bm{\alpha},\bm{\beta}}\}$ consists of at least $d^2_{\mathtt{i}_k}d^2_{A_k}$ linear independent matrices and that $\beta_k$ spans $\{0,\dots,d^2_{\mathtt{o}_k} -1\}$, where
\begin{gather}
  \eta^{(t)}_{\bm{\alpha},\bm{\beta}} = \Tr_{\mathtt{o}_t} [\rho^{(t+1)}_{\alpha_{t+1}} E^{{(t)}}_{\beta_t}V^{(t)}\eta^{(t-1)}_{\bm{\alpha},\bm{\beta}}V^{(t)\dagger}], t \ge 0,
\end{gather}
and $\eta^{(-1)}_{\bm{\alpha},\bm{\beta}} = \rho^{(0)}_{\alpha_0}$. From \eqref{eq:comb_isometry_repr}, the recovered probability is 
\begin{align}\label{eq:prob_isometry_k}
  p_{\bm{\alpha},\bm{\beta}}(W^{(k)}) = \Tr[E^{(k)}_{\beta_{k}}W^{(k)} \eta^{(k-1)}_{\bm{\alpha},\bm{\beta}} W^{(k)\dagger}].
\end{align}
Then, the isometry $V^{(k)}$ is reconstructed by optimizing the cost function $\mathcal{F}$ on the Stiefel manifold without constraints
\begin{align}\label{eq:stiefel_problem}
  \min_{W^{(k)}\in\mathrm{St}^{(k)}}\mathcal{F}(W^{(k)})= \sum_{\bm{\alpha},\bm{\beta}} \vert\tilde{p}_{\bm{\alpha},\bm{\beta}} - p_{\bm{\alpha},\bm{\beta}}(W^{(k)})\vert^2,
\end{align}
where $\tilde{p}_{\bm{\alpha},\bm{\beta}}=s^{(k)}_{\bm{\alpha},\bm{\beta}}/n_s$ represents the measurement probability, $n_s$ is the total number of samples, $\mathrm{St}^{(k)}:=\{X\in \mathbb{C}^{(k)}:X^\dagger X = I\}$ represents the Stiefel manifold on which $V^{(k)}$ lies, and $\mathbb{C}^{(k)}:=\mathbb{C}^{d_{\mathtt{o}_k}d_{A_{k+1}}\times d_{\mathtt{i}_k}d_{A_k}}$.

Note that the stepwise optimization determines one isometry at each time step. Hence, the iQCT has the capability of providing isometries of $C^{(k)}$, $k\le N$, while performing tomography to $C^{(N)}$. The causality indicates that the isometries of $C^{(k)}$, $k\le N$, completely characterize the truncated quantum network from time step $0$ to $k-1$. This property facilitates experimenters to analyze the currently determined information of the truncated quantum network when the iQCT is determining later isometries.

Remarkably, the proposed method is both efficient and effective when the experimenter has prior information about the required ancillary dimensions beyond the purity of Choi state of the quantum comb. For example, it is reasonable to set the ancillary dimensions as $2^{n_{\max}}$ when prior information suggests that there are at most $n_{\max}$ qubits, excluding the input and output systems, in the quantum network. The ancillary dimensions can also be reduced when characterizing non-Markovian quantum noise with mild system-environment correlations, for example, the cross talk between the quantum system and few environment qubits.

Note that this method can be implemented by instruments without loss of efficiency, and has the capability to characterize non-Markovian quantum noise on the recent noisy intermediate-scale quantum devices. The modification to the instrument-based {\name} is described in the supplementary material.

\section{Reconstruct Isometry by ADAM on Stiefel Manifold}\label{sec:stiefel_adam}

To solve the optimization problem defined in \eqref{eq:stiefel_problem}, we introduce the adaptive moment estimation (ADAM) method \cite{kingma2014Adam,li2020Efficient,reddi2019Convergence,zou2019Sufficient} on the Stiefel manifold. See the supplementary material for details. The Stiefel ADAM method iteratively optimizes the cost function that preserves the orthonormality of parameter matrix. At iteration $t$, the Stiefel ADAM updates the biased first and second moments for the descent direction as
\begin{align}
  M &\leftarrow \gamma_1 M - (1 - \gamma_1) G,\\
  v &\leftarrow \gamma_2 v + (1 - \gamma_2)\|G\|_{F}^2,
\end{align}
respectively, where $G=\partial \mathcal{F}(W^{(k)})/\partial W^{(k)*}$, $\gamma_1$ and $\gamma_2$ are previously selected constants, and $\|\cdot\|_{F}$ represents the Frobenius norm. Then, based on the canonical innerproduct, the moments are projected onto the tangent space of the Stiefel manifold with bias correction 
\begin{align}
  D = \frac{MW^{(k)\dagger} - W^{(k)}M^{\dagger}}{(1-\gamma_1^t)\sqrt{v/(1-\gamma_2^t)+\epsilon}},
\end{align}
where $\epsilon>0$ is a small constant. Finally, the isometry $W^{(k)}$ is updated by Cayley transformation
\begin{align}
  W^{(k)} \leftarrow (I+\kappa\frac{D}{2})^{-1}(I - \kappa\frac{D}{2})W^{(k)},
\end{align}
where $\kappa = \min( \kappa_0, 1 / (\|D\|_F + \epsilon))$ is the adaptive step size, and $\kappa_0$ is the pre-selected maximum step size. The optimization loop is terminated when $\|G W^{(k)\dagger} - W^{(k)}G^\dagger\|_F<\delta$, where $G W^{(k)\dagger}W^{(k)} - W^{(k)}G^\dagger W^{(k)} = \nabla_{W^{(k)}\in\mathrm{St}^{(k)}}\mathcal{F}(W^{(k)})$ is the Riemannian gradient of $\mathcal{F}$ on the Stiefel manifold, and $\delta>0$ is a hyperparameter.


\section{Experimental Results}

To showcase the performance of our proposed method, we first conduct simulations that applying the iQCT to a series of random quantum networks defined by isometries. These numerical simulations are conducted by C++ on the computer enpowered by 2 $\times$ AMD EPYC 7742 CPUs with 1 TB RAM. Measurement probabilities $\tilde{p}_{\bm{\alpha},\bm{\beta}}$ are idealy determined to compose the cost function. Hence, we can benchmark our method in noiseless circumstances to demonstrate the uppder bound of performances. Main criteria of accuracy and efficiency are the Hilbert-Schimdt distance $D_{HS}(\Upsilon)$ between the reconstructed Choi state $\Upsilon^\prime$ and target Choi state $\Upsilon$ and the absolute running time $\Delta_T$ (ms) running on the same computer to fairly demonstrate the efficiency. Code is available in Ref.~\cite{2024Code}.

We first consider the cases that are well-suited for the iQCT, where the ancillary dimensions of the target isometries are far smaller than the maximal ancillary dimensions. Specifically, we set the ancillary demensions as $d_{A_{k}}=\prod_{t=0}^{k-1}d_{\mathtt{i}_{t}}$, while the full ancillary dimensions are $d_{A_{k}}=\prod_{t=0}^{k-1}d_{\mathtt{i}_t}d_{\mathtt{o}_t}$. Meanwhile, the ancillary dimensions are known to the experimenter.

In Fig.~\ref{fig:analytic}(a) and (b), we show the HS distance and absolute running time w.r.t. dimensions of input and output states. We set the dimensions of input and output states at the same time step to be identical. Labels `$n$-$m$' represent that states at time step $0$ and $1$ consist of $n$ and $m$ qubits, respectively. All optimizations are terminated under the same condition by setting $\delta=10^{-4}$. From the results, the iQCT efficiently reconstructs quantum networks with low HS distances. The low variance of HS distances and absolute running times under the same settings indicates the stability of the iQCT in ideal circumstances.

\begin{figure}[t]
  \includegraphics[width=0.45\textwidth]{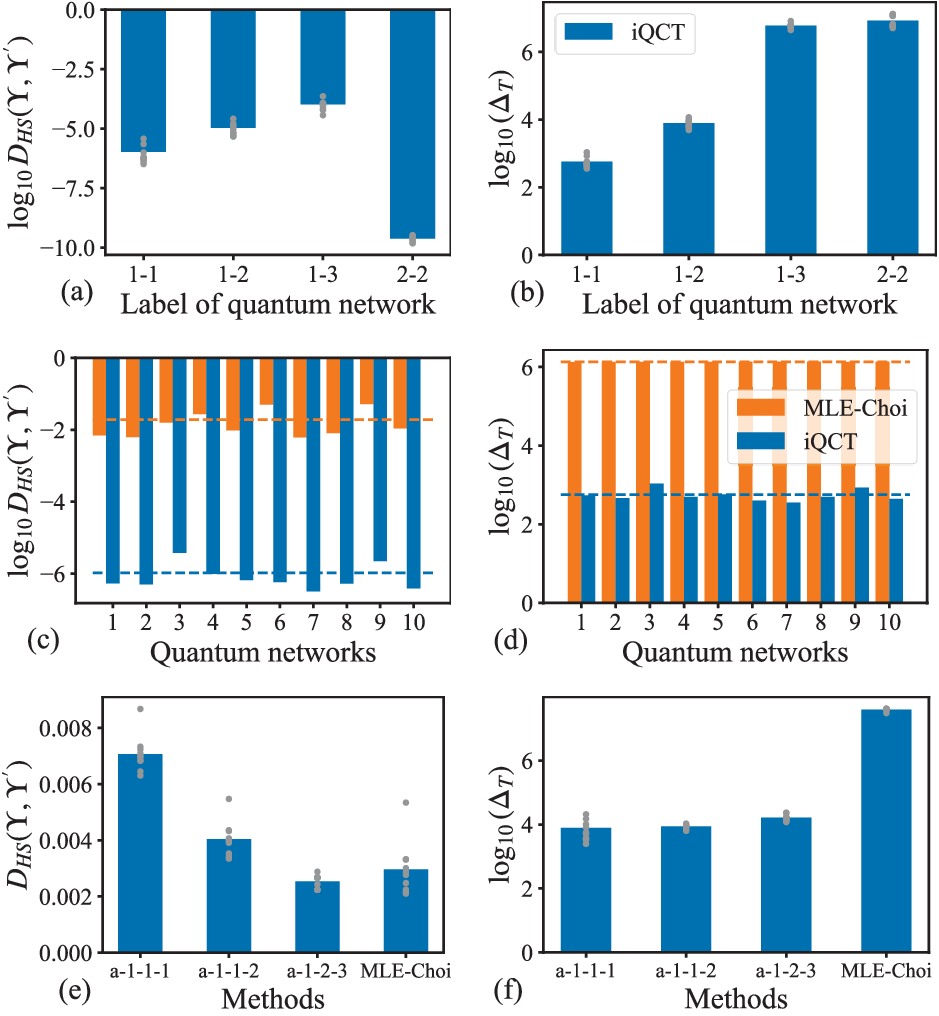}
  \caption{\label{fig:analytic} Results of HS distance and absolute running time. Bars and gray points in (a), (b), (e), and (f) represent average values and values of individual results of the random quantum networks, respectively. (a) and (b) are the results of QCT for 10 random `$n$-$m$' quantum networks. (c) and (d) are the results of QCT for 10 random `1-1' quantum networks, where dashed lines and bars represent average values and values of individual results of networks, respectively. (e) and (f) are results of reconstructing non-Markovian quantum noise. We implements iQCT labeled by `a-$\log_2 d_{A_1}$-$\log_2 d_{A_2}$-$\log_2 d_{A_3}$' and the MLE-Choi, respectively.}
\end{figure}

We further compare the iQCT to the recent state-of-the-art QCT method which construct the Choi state based on the maximum likelihood estimation (MLE) with physical constraints implemented by Dikstra projection \cite{white2022NonMarkovian}, labeled by MLE-Choi. 
In Fig.~\ref{fig:analytic}(c) and (d), we show the fidelity and absolute running time of the two schemes in estimating 10 random `1-1' quantum networks. In this situation, our proposed iQCT 99.96\% improvements to the absolute running time when the iQCT obtains 0.0193 reduction (corresponding to the 99,99\% improvements) to the HS distance on average. We also tried to perform MLE-Choi to other instances in Fig.~\ref{fig:analytic}(a) and (b). However, it is hard to determine results within acceptable running times due to extremely large dimensions of Choi states. 

\begin{figure}[t]
  \includegraphics[width=0.42\textwidth]{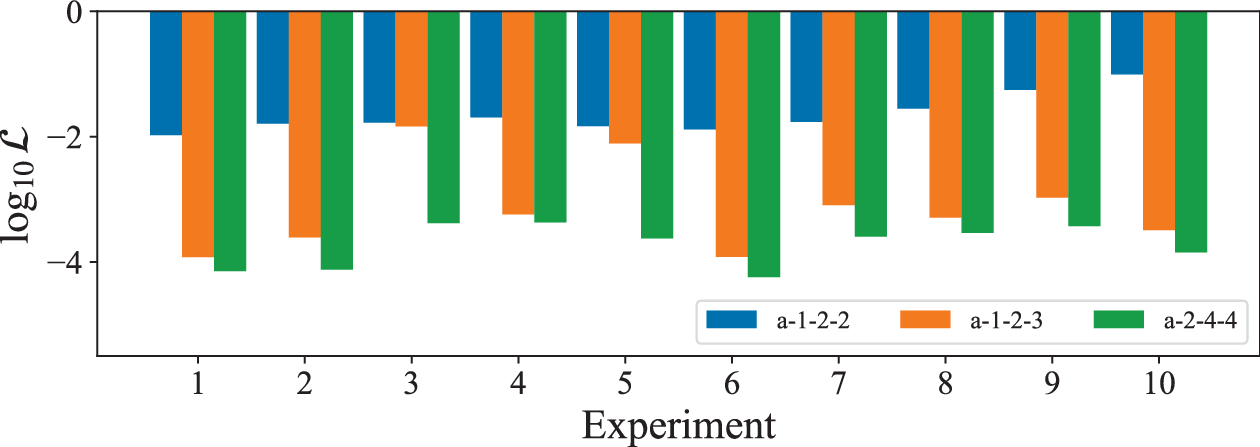}
  \caption{\label{fig:rc_cmp} Results on real quantum chips. Labels in legends represent `a-$\log_2 d_{A_0}$-$\log_2 d_{A_1}$-$\log_2 d_{A_2}$'.}
\end{figure}

Another application of iQCT is operationally characterizing non-Markovian quantum noise in a quantum computer. In the simulation, we consider the single qubit system with 3 time steps. we assume that ZZ cross talk is the main resource of the non-Markovian noise. The system-environment (SE) unitaries are set as $\exp (i \sum_{i,j} a_{P_{S},P_{E}} P_{S} \otimes P_{E})$, where $P_{S}, P_{E}\in\{I,X,Y,Z\}$, $a_{ZZ}=0.05$, and other coefficient are uniformly sampled from 0 to 0.001. Initial environment states are set as $\vert 0 \rangle \langle{0}\vert$. In Fig.~\ref{fig:analytic}(e) and (f), we show the HS distance and the absolute running time in variant configurations. From the result, the iQCT reconstruct the non-Markovian quantum noise with significant lower running times and comparative accuracy w.r.t. the MLE-Choi.

For this application, we also apply iQCT to a single-qubit 3-time-step real quantum device through interventions of instruments. See the supplementary material for details. In this scenario, non-measurement quantum operations are known as unitaries, which means that the tomography results are responsible for the unitary gates instead of CPTP operations. We utilize the relative cost $\mathcal{L}(\mathcal{C}_{\mathrm{dr}},\mathcal{C}_{\mathrm{full}})$ to measure distances between reconstructed dimension-reduced quantum network $\mathcal{C}_{\mathrm{dr}}$ and full-ancillary-dimension quantum network $\mathcal{C}_{\mathrm{full}}$. $\mathcal{L}=0$ when $\mathcal{C}_{\mathrm{dr}}$ is equivalent to $\mathcal{C}_{\mathrm{full}}$ in cases that non-measurement operations span the unitary space at each intermediate time step. From Fig.~\ref{fig:rc_cmp}, iQCT achieves significant similarity between $\mathcal{C}_{\mathrm{dr}}$ and $\mathcal{C}_{\mathrm{full}}$ when the ancillary dimensions are larger than $d_{A_0}=2$, $d_{A_1}=4$, and $d_{A_2}=8$. This indicates the fact that the ancillary dimensions of the real quantum chip are limited, and has potentiality to efficiently reconstruct non-Markovian quantum noise with fewer computational resources but high accuracy.

\begin{figure}[t]
  \includegraphics[width=0.30\textwidth]{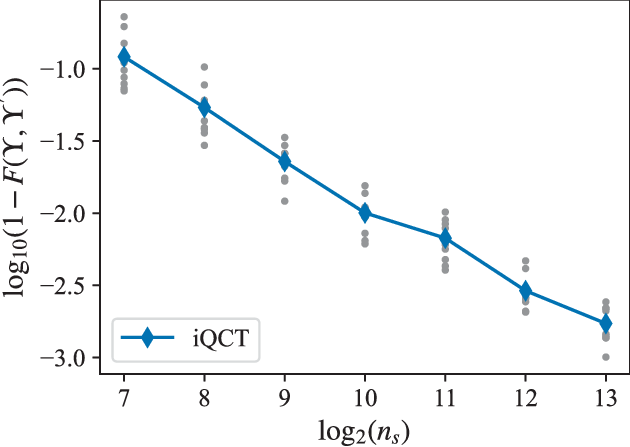}
  \caption{\label{fig:fidelity_samples} Logarithmic Fidelity gaps to $1$ with respect to the number of samples. Colored line represents the average values, while gray points are values of individual results of the random quantum networks}
\end{figure}

Then, we perform the iQCT on 10 quantum networks with 3 time steps, where probabilities are estimated via sampling. Note that we do not adopt any advanced measurement strategies, such as classic shadow \cite{huang2022Learning,jnane2024Quantum}, although we could adopt these strategies by moderately modifying the cost function. In Fig.~\ref{fig:fidelity_samples}, we demonstrate the fidelity w.r.t. the number of samples to showcase robustness against probability estimation errors. The accuracy of iQCT, indicated by the fidelity, shows a linear correlation with the accuracy of probability estimation.

\section{Conclusion}
In this work, we introduce an efficient and powerful isometry-based quantum comb tomography method on the Stiefel manifold, called iQCT, where the isometries are optimized by Stiefel ADAM. Our approach mathematically reconstructs target quantum networks modeled by quantum combs by learning isometries that hold orthonormal constraints on the Stiefel manifold, and thus the CPC constraints are inherently satisfied. The stepwise optimization determines one isometry at each time step. Therefore, iQCT has the capability to provide information about the truncated quantum comb during the tomography process.

The iQCT enables dimension-reduced quantum comb tomography by reduce the ancillary dimensions of isometries with bounded error. The requirement of parameters are significantly reduced compared with the conventional Choi-state-based method. As a result, the iQCT achieves significant improvements on the efficiency with bounded loss of accuracy. Our proposed method is especially suitable for the cases  when the experimenter has the prior information that the time correlations are limited and can be characterized by low-dimensional isometries, or for characterizing non-Markovian quantum noise with mild system-environment correlations.

From experimental results, our proposed method exhibits both high accuracy and efficiency. Particularly, compared with the state-of-the-art Choi-state-based QCT method, iQCT achieves 99.96\% average improvements in efficiency, when the iQCT performs better accuracy, for characterizing small and short quantum networks. These advantages are enhanced when the dimensions of input and output states and the number of time steps increase. Moreover, experiments on the real quantum device indicate the potentiality to efficiently reconstruct non-Markovian quantum noise by iQCT with fewer  computational resources but low HS distances.

\begin{acknowledgments}
  This work is supported by the National Natural Science Foundation of China No.62471126, No.61960206005, and No.61871111, Fundamental Research Funds for the Central Universities 2242022k60001, Jiangsu Key R\&D Program Project BE2023011-2.
\end{acknowledgments}

\bibliography{citations}

\end{document}


\title{Supplementary Material to \\ "Quantum Network Tomography via Learning Isometries on Stiefel Manifold"}

\maketitle

\section*{Proof and Demonstration to Theorem 1}
\begin{theorem}\label{thm:bound}
  Let $\Upsilon$ denote the Choi state of an $N$-time-step quantum comb $\mathcal{C}^{(N)}$. Suppose $\mathcal{P}=\mathrm{Tr}({\Upsilon}^2)/(\Tr\Upsilon)^2$ is the purity of normalized $\Upsilon$, there exists a list of isometries $\mathbb{V}:=[V^{(t)}:= \mathcal{H}_{\mathtt{i}_{t}}\otimes\mathcal{H}_{A_t} \to \mathcal{H}_{\mathtt{o}_t}\otimes\mathcal{H}_{A_{t+1}}]_{t=0}^{N-1}$ with $O(\max_{0\leq t< N}(d_{\mathtt{i}_{t}}d_{\mathtt{o_{t}}}) R^2)$ entries that implements the Choi state $\Upsilon^\prime$, such that the Hilbert-Schmidt distance
  \begin{align}
    D_{HS} \leq (\Tr\Upsilon)^2[\mathcal{P}-R\beta_{+}^2(K) + \frac{1}{R^2}\left( 1-R \beta_{+}(K)\right)^2],
  \end{align}
  when $1 \leq R < K-1$, and 
  \begin{align}
    D_{HS} \leq \max_{l\ge R} (\Tr\Upsilon)^2(l-R)\left[1+\frac{l-R}{R^2}\right] \beta^2_{-}(l),
  \end{align}
  when $K \leq R < d_{\rho}$, where $D_{HS}(\Upsilon^\prime,\Upsilon) = \Tr[(\Upsilon^\prime-\Upsilon)^2]$, $\beta_{\pm}(x)=\frac{1}{x} \pm \sqrt{ \frac{1}{x(x-1)}\left( \mathcal{P}-\frac{1}{x} \right) }$, $K = \lceil \frac{1}{P} \rceil$.
\end{theorem}

To prove Theorem~\ref{thm:bound}, we first recall the quantum mixed state compiling. Based on the \cite{ezzell2023quantum}, the unique optimal $R$-rank quantum state $\sigma$ to approximate to a $d$-dimensional quantum state $\rho$ such that minimizes the Hilbert-Schmidt distance $D_{HS}(\sigma,\rho) = \Tr[(\sigma-\rho)^2]$ distance is
\begin{align}
  \sigma = \Pi_R\rho\Pi_R + \left(\frac{1-\mathrm{Tr}[\Pi_R\rho\Pi_R]}{R}\right)\Pi_R,
\end{align}
where $\Pi_R$ is a projector onto the eigenstates corresponding to the $R$ largest eigenvalues of $\rho$. In this case, the Hilbert-Schmidt distance reads
\begin{align}\label{eq:DHS_lambda}
  D_{HS} = \sum_{i=R}^{d-1} \lambda^2_{i} + \frac{1}{R^2}\left(\sum_{i=R}^{d-1}\lambda_{i}\right)^2,
\end{align}
where $\lambda_i$ represents the $i$-th eigenvalue of $\rho$ and $1 \ge \lambda_0\ge \lambda_1 \ge \dots \lambda_{d-1} \ge 0$.

Considering the case when $d=3$, we have Lemma~\ref{lemma:min_lambda_i} and Lemma~\ref{lemma:max_lambda_k}.

\begin{lemma}\label{lemma:min_lambda_i}
  The optimal solution of the optimization problem
  \begin{align}
    \min~& \lambda_{i}, \\
    s.t.~& \lambda_{i}^2 +\lambda_{j}^2 + \lambda_{k}^2 = B, \\
    & \lambda_{i} + \lambda_{j} + \lambda_{k} = b, \\
    & b\geq \lambda_{i} \geq \lambda_{j} \geq \lambda_{k} > 0,
  \end{align}
  is equavalent to minimize $\lambda_k$ with the constriants, such that
  \begin{align}
  \lambda_{i} &=  \frac{b}{2} + \sqrt{ \frac{1}{2}\left( B-\frac{b^2}{2} \right) },\label{eq:lambda_i_lk_0}\\
  \lambda_{j} &= \frac{b}{2} - \sqrt{ \frac{1}{2}\left( B-\frac{b^2}{2} \right) }, \label{eq:lambda_j_lk_0}\\
  \lambda_{k} &= 0,
  \end{align}
  when $b^2 \geq B\geq \frac{1}{2}b^2$, and
  \begin{align}
    \lambda_{i}& = \lambda_{j} = \frac{1}{3}b + \frac{1}{6}\sqrt{ 6B - 2b^2 },\label{eq:lambda_i_eq_lambda_j}\\
    \lambda_{k}& = \frac{1}{3}b - \frac{1}{3}\sqrt{ 6B-2b^2 },\label{eq:lambda_k_ne_0}
  \end{align}
  when $\frac{b^2}{2} \geq B \geq \frac{b^2}{3}$. (i.e. $\frac{1}{2}\geq \frac{B}{b^2}\geq \frac{1}{3}$).
\end{lemma}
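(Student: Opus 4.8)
The plan is to use the fact that the two equality constraints leave only a single degree of freedom: I would reparametrize the feasible set by $\lambda_k$ alone, show that $\lambda_i$ is a monotone function of $\lambda_k$ (which is precisely the claimed equivalence), and then read off the minimizer by locating where the remaining inequality constraints first become active. Concretely, fixing $\lambda_k$ forces $\lambda_i+\lambda_j=b-\lambda_k$ and $\lambda_i^2+\lambda_j^2=B-\lambda_k^2$, so $\lambda_i\ge\lambda_j$ are the larger and smaller roots of the resulting quadratic,
\begin{align}
  \lambda_{i,j}=\tfrac{1}{2}\Big[(b-\lambda_k)\pm\sqrt{D}\Big],\qquad D:=2(B-\lambda_k^2)-(b-\lambda_k)^2=2B-b^2+2b\lambda_k-3\lambda_k^2 ,
\end{align}
where $D=(\lambda_i-\lambda_j)^2\ge0$. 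This presents $\lambda_i$ as an explicit function of the single variable $\lambda_k$ on the feasible set.

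The heart of the argument is the monotonicity $d\lambda_i/d\lambda_k\ge0$. Differentiating the expression above gives
\begin{align}
  \frac{d\lambda_i}{d\lambda_k}=\frac{1}{2}\left(\frac{b-3\lambda_k}{\lambda_i-\lambda_j}-1\right),
\end{align}
and since $b-3\lambda_k=(\lambda_i-\lambda_j)+2(\lambda_j-\lambda_k)\ge\lambda_i-\lambda_j$ by the ordering $\lambda_j\ge\lambda_k$, the derivative is nonnegative. Hence $\lambda_i$ is nondecreasing in $\lambda_k$, so minimizing $\lambda_i$ is equivalent to minimizing $\lambda_k$, exactly as asserted. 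Along the way I would check that the reparametrization is faithful, i.e.\ that $\lambda_i\ge\lambda_j\ge\lambda_k$ is automatically preserved throughout the relevant range of $\lambda_k$ (this amounts to $\sqrt{D}\le b-3\lambda_k$, which the same inequality delivers).

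It then remains to minimize $\lambda_k$ over its feasible range, which is cut out by the positivity $\lambda_k\ge0$ together with $D\ge0$; the latter reads $\lambda_k\ge z_-:=\tfrac{1}{3}\big(b-\sqrt{6B-2b^2}\big)$ (real precisely when $B\ge b^2/3$). Thus $\min\lambda_k=\max(0,z_-)$, with the sign of $z_-$ governed by the threshold $B=b^2/2$. For $b^2/2\le B\le b^2$ one has $z_-\le0$, the binding constraint is $\lambda_k=0$, and solving $\lambda_i+\lambda_j=b$, $\lambda_i^2+\lambda_j^2=B$ yields \eqref{eq:lambda_i_lk_0}--\eqref{eq:lambda_j_lk_0}; the upper bound $B\le b^2$ is exactly what keeps $\lambda_j\ge0$. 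For $b^2/3\le B\le b^2/2$ one has $z_-\ge0$, the binding constraint is $D=0$ (forcing $\lambda_i=\lambda_j$), and substituting $\lambda_k=z_-$ gives \eqref{eq:lambda_i_eq_lambda_j}--\eqref{eq:lambda_k_ne_0}. I expect the main obstacle to be the monotonicity step together with the bookkeeping of which inequality---positivity $\lambda_k\ge0$ or the degeneracy $\lambda_i=\lambda_j$---binds in each regime; once the sign of $d\lambda_i/d\lambda_k$ is settled, both cases and their closed forms follow by direct substitution.
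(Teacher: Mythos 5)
Your proposal is correct and follows essentially the same route as the paper: your derivative $\tfrac{1}{2}\bigl(\tfrac{b-3\lambda_k}{\lambda_i-\lambda_j}-1\bigr)$ simplifies exactly to the paper's $\tfrac{\lambda_j-\lambda_k}{\lambda_i-\lambda_j}\ge 0$ (obtained there by implicit differentiation of the constraints), and your condition $D\ge 0$ is precisely the paper's mean-value inequality $2(B-\lambda_k^2)\ge(b-\lambda_k)^2$ yielding $\lambda_k\ge\max\{0,\tfrac{1}{3}b-\tfrac{1}{3}\sqrt{6B-2b^2}\}$. The subsequent case split at $B=b^2/2$ and the closed forms match the paper's proof as well.
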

\begin{proof}
  The proof of Lemma~\ref{lemma:min_lambda_i} is based on the similar spirits of \cite{zhang2024quantification}. The differential of the constraints read
  \begin{align}
    &\lambda_{i} \partial \lambda_{i} + \lambda_{j} \partial \lambda_{j}  + \lambda_{k} \partial \lambda_{k} = 0, \\
    &\partial \lambda_{i} + \partial \lambda_{j}  + \partial \lambda_{k} = 0,
    \end{align}
  which indicate that
  \begin{align}
  \partial\lambda_{i} &= -\frac{\lambda_{k}-\lambda_{j}}{\lambda_{i}-\lambda_{j}} \partial \lambda_{k}, \label{eq:plambdai_k}\\
  \partial\lambda_{j} &= -\frac{\lambda_{i}-\lambda_{k}}{\lambda_{i}-\lambda_{j}} \partial \lambda_{k} \label{eq:plambdaj_k}.
  \end{align}
  Note that $\frac{\partial \lambda_{i}}{\partial \lambda_{k}} =\frac{\lambda_{j}-\lambda_{k}}{\lambda_{i}-\lambda_{j}}\geq {0}$. Hence, minimizing $\lambda_{i}$ requires minimizing $\lambda_{k}$.

  The mean value inequalities suggests that $2(\lambda_j^2+\lambda_k^2)\ge (\lambda_j + \lambda_k)^2$. Then, we have 
  \begin{align}
    2(B-\lambda_k^2)\ge (1-\lambda_k)^2.
  \end{align}
  Then, the lower bound for $\lambda_k$ is 
  \begin{align}
    \lambda_k\ge\max\{0, \frac{1}{3}b - \frac{1}{3}\sqrt{ 6B-2b^2 }\}.
  \end{align}
  Therefore, when $\frac{b^2}{2} \geq B \geq \frac{b^2}{3}$, we have $\lambda_{k} = \frac{1}{3}b - \frac{1}{3}\sqrt{ 6B-2b^2 }$. Then, minimizing $\lambda_i$ implies $\lambda_i = \lambda_j$. Hence, we have \eqref{eq:lambda_i_eq_lambda_j} and \eqref{eq:lambda_k_ne_0}. When $b^2 \geq B\geq \frac{1}{2}b^2$, we have $\lambda_{k} = 0$. By solving the system
  \begin{align}
    \lambda_i^2 + \lambda_j^2 = B,\\
    \lambda_i + \lambda_j = b,
  \end{align}
  we have \eqref{eq:lambda_i_lk_0} and \eqref{eq:lambda_j_lk_0}. 
\end{proof}

\begin{lemma}\label{lemma:max_lambda_k}
  The optimal solution of the optimization problem
  \begin{align}
    \max~ & \lambda_{k} \\
    s.t.~& \lambda_{i}^2 +\lambda_{j}^2 + \lambda_{k}^2 = A \\
    & \lambda_{i} + \lambda_{j} + \lambda_{k} = a \\
    & a\geq \lambda_{i} \geq \lambda_{j} \geq \lambda_{k} > 0
  \end{align}
  is 
\begin{align}
\lambda_{i} = \frac{a}{3} + a\sqrt{ \frac{2}{3}\left( A-\frac{1}{3} \right) }, \\
\lambda_{j}=\lambda_{k} = \frac{a}{3} - \frac{1}{2}\sqrt{ \frac{2}{3}\left( A-\frac{1}{3} \right) }
\end{align}
when $a^2 \geq A \geq \frac{1}{3}a^2$.
\end{lemma}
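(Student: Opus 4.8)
The plan is to establish Lemma~\ref{lemma:max_lambda_k} as the dual of Lemma~\ref{lemma:min_lambda_i}: minimizing the \emph{largest} eigenvalue forces two eigenvalues to meet at the top (or one to vanish), so by symmetry maximizing the \emph{smallest} eigenvalue should force the two smallest to coincide, $\lambda_j=\lambda_k$. Accordingly, I would not use the plain quadratic-mean inequality (which ignores the ordering), but instead encode the ordering $\lambda_i\geq\lambda_j\geq\lambda_k$ as a scalar inequality that bounds $\lambda_k$ from above, and then show it is saturated precisely at $\lambda_j=\lambda_k$.

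The key step is the elementary observation that $\lambda_i-\lambda_k\geq 0$ and $\lambda_j-\lambda_k\geq 0$, whence
\begin{align}
  (\lambda_i-\lambda_k)(\lambda_j-\lambda_k)\geq 0.
\end{align}
Using the two equality constraints to write $\lambda_i+\lambda_j=a-\lambda_k$ and $\lambda_i\lambda_j=\tfrac12\left[(a-\lambda_k)^2-(A-\lambda_k^2)\right]$, this reduces to the quadratic inequality
\begin{align}
  6\lambda_k^2-4a\lambda_k+(a^2-A)\geq 0.
\end{align}
Because $\lambda_k$ is the smallest of three numbers summing to $a$, we have $\lambda_k\leq a/3$, which picks out the lower root and yields
\begin{align}
  \lambda_k\leq \frac{a}{3}-\frac{1}{6}\sqrt{6A-2a^2}.
\end{align}
Equality forces $(\lambda_i-\lambda_k)(\lambda_j-\lambda_k)=0$; since $\lambda_i\geq\lambda_j$, the admissible case is $\lambda_j=\lambda_k$, and substituting $\lambda_i=a-2\lambda_k$ into $\lambda_i^2+2\lambda_k^2=A$ reproduces the stated closed forms $\lambda_i=\frac{a}{3}+\frac{1}{3}\sqrt{6A-2a^2}$ and $\lambda_j=\lambda_k=\frac{a}{3}-\frac{1}{6}\sqrt{6A-2a^2}$.

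To confirm that this is genuinely the maximizer rather than a spurious boundary point, I would run the differential argument of Lemma~\ref{lemma:min_lambda_i}. The relations $\lambda_i\partial\lambda_i+\lambda_j\partial\lambda_j+\lambda_k\partial\lambda_k=0$ and $\partial\lambda_i+\partial\lambda_j+\partial\lambda_k=0$ show that any interior stationary point of $\lambda_k$ on the feasible circle must satisfy $\lambda_i=\lambda_j$, which is impossible in the open region $\lambda_i>\lambda_j>\lambda_k$; hence $\lambda_k$ attains its extrema only at the two endpoints $\lambda_i=\lambda_j$ and $\lambda_j=\lambda_k$ of the ordered arc, and a direct comparison shows the maximum is the $\lambda_j=\lambda_k$ endpoint found above. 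I would then verify the validity window: $A\geq a^2/3$ is exactly the reality condition $6A-2a^2\geq 0$ (equivalently quadratic-mean $\geq$ arithmetic-mean), while $A\leq a^2$ keeps $\lambda_k\geq 0$ and preserves $\lambda_i\geq\lambda_j$.

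The main obstacle I expect is the branch and boundary bookkeeping rather than any hard estimate: one must argue that the non-negative-product inequality, not the ordering-blind inequality $2(\lambda_i^2+\lambda_j^2)\geq(\lambda_i+\lambda_j)^2$, is the binding constraint (the latter would instead push $\lambda_k$ up to the value of $\lambda_i$, violating the ordering), justify the selection $\lambda_k\leq a/3$ to discard the upper root, and check that the saturating configuration $\lambda_j=\lambda_k$ keeps $\lambda_i\geq\lambda_j>0$ across the entire stated range $a^2\geq A\geq a^2/3$.
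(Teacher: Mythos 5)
Your proof is correct, and it supplies substantially more than the paper does: the paper's entire proof of this lemma is the single sentence ``The proof is obvious by setting $\lambda_j = \lambda_k$,'' i.e., it asserts the ansatz and leaves both the optimality of $\lambda_j=\lambda_k$ and the root selection unargued. Your route fills exactly that gap: encoding the ordering as $(\lambda_i-\lambda_k)(\lambda_j-\lambda_k)\geq 0$ and eliminating $\lambda_i+\lambda_j$ and $\lambda_i\lambda_j$ via the two constraints yields the quadratic bound $6\lambda_k^2-4a\lambda_k+(a^2-A)\geq 0$; the side condition $\lambda_k\leq a/3$ discards the upper root, saturation forces $\lambda_j=\lambda_k$ (since $\lambda_i=\lambda_k$ collapses to the degenerate case $A=a^2/3$), and your critical-point argument (mirroring the differential computation the paper does carry out for Lemma~1) independently confirms that on the ordered arc the only candidates are the endpoints $\lambda_i=\lambda_j$ and $\lambda_j=\lambda_k$, with the latter giving the larger $\lambda_k$. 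Your remark about which inequality binds is also apt: the ordering-blind bound $2(\lambda_i^2+\lambda_j^2)\geq(\lambda_i+\lambda_j)^2$ saturates at a configuration where $\lambda_k$ would exceed $\lambda_i=\lambda_j$, so it cannot certify the maximum here, whereas it was the right tool in the minimization lemma. One caveat: the closed forms you derive, $\lambda_i=\frac{a}{3}+\frac{1}{3}\sqrt{6A-2a^2}$ and $\lambda_j=\lambda_k=\frac{a}{3}-\frac{1}{6}\sqrt{6A-2a^2}$, are the correct general-$a$ expressions (consistent with the paper's Lemma~1 formulas), but they are not literally the formulas stated in Lemma~2: the paper's expressions $\frac{a}{3}+a\sqrt{\frac{2}{3}\left(A-\frac{1}{3}\right)}$ and $\frac{a}{3}-\frac{1}{2}\sqrt{\frac{2}{3}\left(A-\frac{1}{3}\right)}$ agree with yours only after setting $a=1$ (the $\frac{1}{3}$ under the root should read $\frac{a^2}{3}$ and the stray factor $a$ should be absent), so your claim to ``reproduce the stated closed forms'' holds up to these normalization typos in the paper's statement rather than verbatim.
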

\begin{proof}
  The proof is obvious by setting $\lambda_j = \lambda_k$.
\end{proof}

Lemma~\ref{lemma:min_lambda_i} and Lemma~\ref{lemma:max_lambda_k} can be generalized into Lemma~\ref{lemma:min_lambda_0_n} and Lemma~\ref{lemma:max_lambdan_n_n}.
\begin{lemma}\label{lemma:min_lambda_0_n}
  The optimal solution of the optimization problem
  \begin{align}
    \min&~ \lambda_{0} \\
    s.t.&~ \sum_{i=0}^{n-1} \lambda_{i}^2 = B \\
    & \sum_{i=0}^{n-1} \lambda_{i} = b \\
    & b\geq \lambda_{0} \geq \lambda_{1} \geq \dots \geq \lambda_{n-1} > 0 \\
  \end{align}
  is
  \begin{align}
    \lambda_{0}=\lambda_{1}=\dots=\lambda_{J-2}=\frac{b-\alpha}{J-1}, \label{eq:lambda_res_0_J_2}\\
    \lambda_{J-1} = \alpha, \label{eq:lambda_res_J_1}\\
    \lambda_{J}=\dots=\lambda_{n-1}=0,\label{eq:lambda_res_J_n_1}
  \end{align}
  where
  \begin{align}
    \alpha = \frac{b}{J} - \sqrt{ \left( 1-\frac{1}{J} \right)\left( B-\frac{b^2}{J} \right) },
  \end{align}
  and $J$ is the integer such that $\frac{b^2}{J} \leq B \leq \frac{b^2}{J-1}$.
\end{lemma}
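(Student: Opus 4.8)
The plan is to collapse the $n$-variable problem onto a one-dimensional search over the value of the largest eigenvalue and then pin down the full vector by a convexity/extreme-point argument. Write $M:=\lambda_0$ for the quantity to be minimised. The key observation is that a feasible vector with $\lambda_0\le M$ exists precisely when one can place $n$ numbers in the box $[0,M]$ with first power sum $b$ and second power sum $B$; hence $\min\lambda_0$ equals the smallest cap $M$ for which such a placement is possible. I would therefore study, for each fixed $M$, the range of achievable values of $\sum_i\lambda_i^2$ subject to $0\le\lambda_i\le M$ and $\sum_i\lambda_i=b$, and then minimise $M$ over those caps that can realise $B$.

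First I would characterise the \emph{maximal} attainable second power sum at a given cap. The admissible set $P_M=\{\lambda:0\le\lambda_i\le M,\ \sum_i\lambda_i=b\}$ is a polytope, and $\lambda\mapsto\sum_i\lambda_i^2$ is (strictly) convex, so its maximum over $P_M$ is attained at a vertex. Every vertex of $P_M$ has all coordinates but at most one pinned to a bound in $\{0,M\}$, i.e.\ it is the ``bang-bang'' configuration with $p:=\lfloor b/M\rfloor$ coordinates equal to $M$, one coordinate equal to the remainder $\alpha=b-pM\in[0,M)$, and the rest zero. This yields the closed form $\Phi(M):=\max_{P_M}\sum_i\lambda_i^2=pM^2+\alpha^2$. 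I would then verify that $\Phi$ is continuous and strictly increasing on $[b/n,b]$ (on each linear piece $\lfloor b/M\rfloor=p$ its derivative is $2p\big((p+1)M-b\big)\ge 0$), while the lower end of the achievable range of $\sum_i\lambda_i^2$ equals $b^2/n$ and never binds, since $B\ge b^2/n$ by Cauchy--Schwarz.

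Granting this, feasibility at cap $M$ is equivalent to $B\le\Phi(M)$, and strict monotonicity forces the minimiser to sit exactly at the equality $B=\Phi(M^\star)$, with $\min\lambda_0=M^\star$. At that point $B$ equals the maximal attainable second power sum, so the optimiser must be the maximising vertex itself (unique up to the non-increasing relabelling): $p=J-1$ coordinates at $M^\star$, one at $\alpha$, and zeros thereafter, which is exactly the structure \eqref{eq:lambda_res_0_J_2}--\eqref{eq:lambda_res_J_n_1}. Substituting this ansatz into the two constraints gives $(J-1)M+\alpha=b$ and $(J-1)M^2+\alpha^2=B$; solving the resulting quadratic (taking the root with $M\ge\alpha$) returns the stated $\alpha$ and $M=(b-\alpha)/(J-1)$. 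Finally the box conditions $0\le\alpha\le M$ translate into $\alpha\ge 0\iff B\le b^2/(J-1)$ and $\alpha\le M\iff B\ge b^2/J$, which is precisely the window $b^2/J\le B\le b^2/(J-1)$ defining $J$ and showing it is the unique admissible integer; this also specialises correctly to Lemma~\ref{lemma:min_lambda_i} when $n=3$.

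The main obstacle is the extremal claim of the second paragraph together with the monotonicity of $\Phi$: everything downstream is algebra, but the whole reduction rests on knowing that $\sum_i\lambda_i^2$ is maximised at a bang-bang vertex and that $\Phi$ increases strictly, so that the minimum of $\lambda_0$ is pinned at the tight constraint rather than somewhere in the interior of a feasibility interval. A secondary technical point is the strict positivity $\lambda_{n-1}>0$ in the hypotheses: the claimed optimum lies on the boundary where some coordinates vanish, so I would argue on the closure $\lambda_i\ge 0$ and recover the open problem as a limit, and likewise track the integer $p=\lfloor b/M^\star\rfloor$ carefully so that the continuous optimum matches the piecewise formula indexed by $J$.
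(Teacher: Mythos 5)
Your proof is correct, but it takes a genuinely different route from the paper's. The paper proves this lemma by an exchange argument that reduces to the three-variable case: it posits the claimed form and shows that any configuration achieving a smaller $\lambda_0$ would contain a three-eigenvalue subproblem (either $\lambda_0 > \lambda_j$ for some $0 < j < J-1$, or some $\lambda_l > 0$ with $l > J-1$) whose optimum, by Lemma~\ref{lemma:min_lambda_i}, forces $\lambda_0 = \lambda_j$ or $\lambda_l = 0$, a contradiction. Your argument instead collapses the problem to a one-dimensional feasibility question: $\min\lambda_0$ equals the smallest cap $M$ for which the polytope $\{0\le\lambda_i\le M,\ \sum_i\lambda_i=b\}$ meets the level set $\sum_i\lambda_i^2=B$; since the maximum of the strictly convex function $\sum_i\lambda_i^2$ over that polytope is attained only at ``bang-bang'' vertices, with maximal value $\Phi(M)=pM^2+(b-pM)^2$ continuous and strictly increasing in $M$ on the relevant range, the minimizing cap satisfies $B=\Phi(M^\star)$ and the optimizer must be the maximizing vertex itself, which is exactly the claimed configuration. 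What your approach buys: it is self-contained (no appeal to Lemma~\ref{lemma:min_lambda_i}), it proves uniqueness of the optimizer up to permutation, it \emph{derives} rather than assumes the window $b^2/J\le B\le b^2/(J-1)$ defining $J$, and it explicitly handles a boundary issue the paper glosses over --- the stated constraint $\lambda_{n-1}>0$ is incompatible with the claimed solution having zero entries, which your passage to the closure $\lambda_i\ge 0$ resolves. What the paper's approach buys: given Lemma~\ref{lemma:min_lambda_i}, the exchange argument is shorter and structurally parallel to the proof of Lemma~\ref{lemma:max_lambdan_n_n}, but it rests on an unproved exhaustiveness claim (``suppose we always have the solution in the form \dots'' together with the assertion that only two violation cases can occur), which your global feasibility-plus-convexity argument renders unnecessary.
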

\begin{proof}
  Suppose we always have the solution in the form as
\begin{align}
  \lambda_{0} = \dots = \lambda_{J -2}, \lambda_{J-1}, \lambda_{J} =\dots = \lambda_{n-1} = 0.
\end{align}
If there exists a group of $\lambda_{0},\dots, \lambda_{n-1}$ that is different from \eqref{eq:lambda_res_0_J_2}, \eqref{eq:lambda_res_J_1}, and \eqref{eq:lambda_res_J_n_1} to obtain smaller $\lambda_{0}$, at least one of the following two cases is satisfied:
\begin{itemize}
  \item There exist $\lambda_{0}, \lambda_{j}, \lambda_{J-1}$, $0 < j < J-1$, $\lambda_{0} > \lambda_{j}$ to obtain smaller $\lambda_{0}$.
  \item There exist $\lambda_{0}, \lambda_{J-1}, \lambda_{l}$, $J-1 < l \le d-1$, $\lambda_{l}>0$ to obtain smaller $\lambda_{0}$.
\end{itemize}

Consider the case that there exist $\lambda_{0}, \lambda_{j}, \lambda_{J-1}$, $0 < j < J-1$, $\lambda_{0} > \lambda_{j}$, such that $\lambda_{0}$ is smaller, while fixing other values and constraints. Then, we can construct an optimization problem as
\begin{align} 
 \min~& \lambda_{0}\\
s.t.~&\lambda_{0}^2 + \lambda_{j}^2 +\lambda_{J-1}^2 = B^\prime, \\
&\lambda_{0} + \lambda_{j} +\lambda_{J-1} = b^\prime, \\
&b^\prime\geq \lambda_{0}\geq \lambda_{j} \geq \lambda_{J-1} \geq 0.
\end{align}
According to Lemma \ref{lemma:min_lambda_i}, we have that $\lambda_{0} = \lambda_{j}$, which reveals contradiction with $\lambda_{0} > \lambda_{j}$.

Then, consider the case of $\lambda_{0}, \lambda_{K-1}, \lambda_{l}$, $K-1 < l \le d-1$, $\lambda_{l}>0$, such that $\lambda_0$ is smaller. Lemma \ref{lemma:min_lambda_i} indicates $\lambda_l = 0$, which reveals contradiction with $\lambda_{l} > 0$.

As a result, \eqref{eq:lambda_res_0_J_2}, \eqref{eq:lambda_res_J_1}, and \eqref{eq:lambda_res_J_n_1} minimizes $\lambda_{0}$.
\end{proof}

\begin{lemma}\label{lemma:max_lambdan_n_n}
  The optimal solution of the optimization problem
  \begin{align}
    \max~& \lambda_{m-1} \\
    s.t.~& \sum_{i=0}^{m-1} \lambda_{i}^2 = A \\
    & \sum_{i=0}^{m-1} \lambda_{i} = a \\
    & a\geq \lambda_{0} \geq \lambda_{1} \geq \dots \geq \lambda_{m-1} \geq 0 \\
  \end{align}
  is
  \begin{align}
    \lambda_{0} = \frac{a}{m} + \sqrt{ \frac{m-1}{m}\left( A - \frac{a^2}{m} \right) }, \label{eq:lambda_0_n_n}\\
    \lambda_{1} = \dots = \lambda_{m-1}= \frac{a}{m} - \frac{1}{m-1}\sqrt{ \frac{m-1}{m}\left( A - \frac{a^2}{m} \right) },\label{eq:lambda_1_m_1_n}
  \end{align}
  when $a^2 \geq A\geq \frac{a^2}{m}$.
\end{lemma}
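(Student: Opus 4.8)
The plan is to mirror the exchange argument used in the proof of Lemma~\ref{lemma:min_lambda_0_n}, but now driving the extremization with the three-variable base case of Lemma~\ref{lemma:max_lambda_k} instead of Lemma~\ref{lemma:min_lambda_i}. Concretely, I would first argue that any optimal configuration must take the structural form
\begin{align}
  \lambda_0 \ge \lambda_1 = \lambda_2 = \dots = \lambda_{m-1},
\end{align}
i.e. all but the largest eigenvalue collapse onto the smallest one, after which the two claimed formulas follow by directly solving the constraint system.

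For the structural claim I would proceed by contradiction. Suppose an optimal point does not have $\lambda_1 = \dots = \lambda_{m-1}$; since these values are ordered, the largest of the group satisfies $\lambda_1 > \lambda_{m-1}$, so there exists an index $j$ with $1 \le j \le m-2$ and $\lambda_j > \lambda_{m-1}$. Freeze every coordinate except $\lambda_0, \lambda_j, \lambda_{m-1}$ and let $a' = \lambda_0 + \lambda_j + \lambda_{m-1}$ and $A' = \lambda_0^2 + \lambda_j^2 + \lambda_{m-1}^2$ denote their partial sum and partial sum-of-squares, held fixed. Within this triple we have $\lambda_0 \ge \lambda_j \ge \lambda_{m-1}$, so maximizing the smallest member is exactly the problem solved by Lemma~\ref{lemma:max_lambda_k}, whose optimum requires the two smallest members to coincide, $\lambda_j = \lambda_{m-1}$. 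Since $\lambda_j > \lambda_{m-1}$ strictly, the current point is not optimal for the restricted triple, so an infinitesimal feasible perturbation — decreasing $\lambda_j$ toward $\lambda_{m-1}$ while adjusting $\lambda_0$ and $\lambda_{m-1}$ to preserve $a'$ and $A'$ — strictly increases $\lambda_{m-1}$ without disturbing the global ordering, contradicting global optimality. Hence every $\lambda_j$ with $1 \le j \le m-2$ equals $\lambda_{m-1}$, establishing the structural form (the case $m=2$ is trivial, as the form is automatic).

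With the structure in hand, I would set $\lambda_1 = \dots = \lambda_{m-1} = x$ and $\lambda_0 = y$ and solve
\begin{align}
  y + (m-1)x = a, \qquad y^2 + (m-1)x^2 = A.
\end{align}
Eliminating $y = a - (m-1)x$ yields the quadratic $m(m-1)x^2 - 2a(m-1)x + (a^2 - A) = 0$, whose discriminant simplifies to $(m-1)(mA - a^2)$, giving $x = \frac{a}{m} \pm \frac{1}{m-1}\sqrt{\frac{m-1}{m}\left(A - \frac{a^2}{m}\right)}$. The ordering constraint $y \ge x$ forces $x \le a/m$, selecting the minus sign, which reproduces \eqref{eq:lambda_1_m_1_n}; back-substituting gives \eqref{eq:lambda_0_n_n}. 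Finally I would verify the admissible range: $A \ge a^2/m$ makes the square root real, while $\lambda_{m-1} = x \ge 0$ is equivalent, after squaring, to $A \le a^2$, matching the hypothesis $a^2 \ge A \ge a^2/m$.

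I expect the main obstacle to be the rigor of the exchange step rather than the algebra: one must ensure that re-optimizing a single triple $(\lambda_0, \lambda_j, \lambda_{m-1})$ via Lemma~\ref{lemma:max_lambda_k} keeps $\lambda_0$ the global maximum and $\lambda_{m-1}$ the global minimum, so that the improved triple remains a feasible point of the full ordered problem. As in Lemma~\ref{lemma:min_lambda_0_n}, this is handled by noting that the re-optimization only increases $\lambda_0$ and $\lambda_{m-1}$ while decreasing the middle coordinate, so an infinitesimal version of the move stays inside the feasible region; arguing that only this single type of violating triple can occur — there is no ``push-to-zero'' boundary case here, unlike in the minimization lemma — is the part that needs care.
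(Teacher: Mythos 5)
Your proposal follows essentially the same route as the paper's own proof: an exchange/contradiction argument that freezes all but a triple $(\lambda_0,\lambda_j,\lambda_{m-1})$ and invokes the three-variable base case (Lemma~\ref{lemma:max_lambda_k}) to force $\lambda_1=\dots=\lambda_{m-1}$, after which the claimed formulas follow from the two-constraint system. The only difference is that you additionally carry out the algebra the paper omits (the quadratic, the sign choice forced by $\lambda_0\ge\lambda_1$, and the admissibility range $a^2\ge A\ge a^2/m$), and you explicitly flag the degenerate-tie subtlety in the exchange step (when several coordinates already sit at the minimum), which the paper's one-line proof leaves equally untreated.
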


\begin{proof}
  If there exists a group of $\lambda_{0},\dots, \lambda_{m-1}$ that is different from \eqref{eq:lambda_0_n_n} and \eqref{eq:lambda_1_m_1_n} to obtain larger $\lambda_{m-1}$, it must have $\lambda_1>\lambda_{m-1}$. Consider the subproblem of $\lambda_0$, $\lambda_1$, $\lambda_{m-1}$, $\lambda_1 > \lambda_{m-1}$, while fixing other values and constraints. By Lemma~\ref{lemma:max_lambda_k}, it requires $\lambda_1 = \lambda_{m-1}$ and reveals the contradiction.
\end{proof}

Then, we reform the problem of maximizing $D_{HS}$ as
\begin{align}
  \max~ &D =B + \frac{1}{R^2}b^2, \\
  s.t.~ &A+B = P, ~a + b = 1, \\
  &a = \sum_{i=0}^{R-1} \lambda_{i}, ~b = \sum_{i=R}^{d-1} \lambda_{i}, \\
  &A = \sum_{i=0}^{R-1} \lambda_{i}^2, ~B = \sum_{i=R}^{d-1} \lambda_{i}^2, \\
  &1\geq \lambda_{0} \geq \dots \geq \lambda_{d-1} \geq 0,
\end{align}
where $d-1 > R > 1$.

We first consider the case when $d=3$. Proposition~\ref{proposition:max_D_requires_min_lk} indicates that maximizing $D_{HS}$ is equavalent to minimizing $\lambda_2$ when $R=1$. When $R = 2$, it is obvious that maximizing $D_{HS}$ is equavalent to maximizing $\lambda_2$. 

\begin{proposition}\label{proposition:max_D_requires_min_lk}
  $D_1 = \lambda_j^2 + \lambda_k^2 + u\lambda_j\lambda_k + uv (\lambda_j + \lambda_k)$ is maximized when $\lambda_k$ is minimized with the constraints $\lambda_i + \lambda_j + \lambda_k = b$ and $\lambda_i^2 + \lambda_j^2 + \lambda_k^2 = B$, where $0<u\le 1$ and $0<v<1$.
\end{proposition}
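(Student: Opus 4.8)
The plan is to treat the two equality constraints as fixing $b$ and $B$, so that only a single degree of freedom remains among $\lambda_i,\lambda_j,\lambda_k$; I will take $\lambda_k$ as that free parameter and show that $D_1$ is monotonically non-increasing in $\lambda_k$ along the constraint surface, which immediately gives that $D_1$ is maximized when $\lambda_k$ is minimized. Since $D_1$ depends only on $\lambda_j$ and $\lambda_k$, the chain rule yields
\begin{align}
  \frac{dD_1}{d\lambda_k} = \frac{\partial D_1}{\partial \lambda_j}\frac{d\lambda_j}{d\lambda_k} + \frac{\partial D_1}{\partial \lambda_k},
\end{align}
and the needed derivative $d\lambda_j/d\lambda_k$ is exactly the differential relation \eqref{eq:plambdaj_k} already established in the proof of Lemma~\ref{lemma:min_lambda_i}, namely $d\lambda_j/d\lambda_k = -(\lambda_i-\lambda_k)/(\lambda_i-\lambda_j)$. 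Notably, because $D_1$ contains no explicit $\lambda_i$ dependence, only \eqref{eq:plambdaj_k} is required and \eqref{eq:plambdai_k} is never used.

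Next I would substitute the partial derivatives $\partial D_1/\partial \lambda_j = 2\lambda_j + u\lambda_k + uv$ and $\partial D_1/\partial \lambda_k = 2\lambda_k + u\lambda_j + uv$, multiply through by the positive factor $\lambda_i-\lambda_j>0$ to clear the denominator, and collect terms. The central claim, which I expect to be the main (purely algebraic) step, is that everything collapses into the clean factorization
\begin{align}
  (\lambda_i-\lambda_j)\frac{dD_1}{d\lambda_k} = (\lambda_j-\lambda_k)\bigl[(u-2)\lambda_i - u(\lambda_j+\lambda_k) - uv\bigr].
\end{align}
Verifying this identity requires careful bookkeeping: the cross terms in $\lambda_j\lambda_k$ and the linear $\lambda_i$ terms carried by $uv$ must cancel, and the surviving quadratic pieces must assemble into the common factor $\lambda_j-\lambda_k$.

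Finally, I would read off the sign from the ordering and the stated parameter ranges. The prefactor satisfies $\lambda_j-\lambda_k\ge 0$, while inside the bracket $u-2<0$ (since $u\le 1$) multiplies $\lambda_i>0$, and the remaining terms $-u(\lambda_j+\lambda_k)$ and $-uv$ are nonpositive because $u,v>0$; hence the bracket is strictly negative and $dD_1/d\lambda_k\le 0$ throughout. This shows $D_1$ is non-increasing in $\lambda_k$, so it attains its maximum exactly where $\lambda_k$ is minimized, completing the argument. The only delicate point is the factorization step above; once it is in hand, the sign analysis and the conclusion are immediate, and the monotonicity statement dovetails with Lemma~\ref{lemma:min_lambda_i}, which then supplies the explicit minimizing configuration.
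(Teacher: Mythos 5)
Your proof is correct and takes essentially the same route as the paper: both use the differential relation \eqref{eq:plambdaj_k} to turn $dD_1/d\lambda_k$ into a single-variable derivative along the constraint surface and then show it is non-positive, so that maximizing $D_1$ forces $\lambda_k$ to its minimum. The only difference is in the last algebraic step and it is cosmetic: the paper bounds the coefficient using $(\lambda_i-\lambda_k)/(\lambda_i-\lambda_j)\ge 1$ together with positivity of $2\lambda_j+u\lambda_k+uv$ to obtain $(2-u)(\lambda_k-\lambda_j)\le 0$, whereas you establish the same sign via the exact factorization $(\lambda_i-\lambda_j)\,dD_1/d\lambda_k=(\lambda_j-\lambda_k)\bigl[(u-2)\lambda_i-u(\lambda_j+\lambda_k)-uv\bigr]$, which checks out algebraically (both arguments share the same implicit assumption $\lambda_i>\lambda_j$ needed to divide by $\lambda_i-\lambda_j$).
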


\begin{proof}
  The differential of $D$ reads
  \begin{align}
    \partial D =& (2 \lambda_j + u\lambda_k + uv)\partial \lambda_j + (2 \lambda_k + u\lambda_j + uv)\partial\lambda_k\\
    =&[(2 \lambda_k + u\lambda_j + uv) - (2 \lambda_j + u\lambda_k + uv)\frac{\lambda_{i}-\lambda_{k}}{\lambda_{i}-\lambda_{j}}] \partial\lambda_k \label{eq:introduce_plambdaj}\\
    \le&(2-u) (\lambda_k -\lambda_j) \partial \lambda_k \le 0,\label{eq:ineq_plambdajk}
  \end{align}
  where the equality in \eqref{eq:introduce_plambdaj} introduces \eqref{eq:plambdaj_k}. This indicates that $D$ is non-increasing w.r.t. $\lambda_k$, which means maximizing $D$ requires minimizing $\lambda_k$.
\end{proof}

\begin{lemma}
  $D_{HS}$ in \eqref{eq:DHS_lambda} obtains maximal value when $\max \lambda_{R-1} = \min \lambda_{R}$ such that $\lambda_{1}=\dots=\lambda_{R-1}=\lambda_{R} = \dots = \lambda_{R+J-2}$, where $J = \lceil \frac{B}{b^2} \rceil$, $d-1 > R > 1$.
\end{lemma}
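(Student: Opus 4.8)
The plan is to treat this as the global version of the three-variable optimization already settled in Lemmas~\ref{lemma:min_lambda_i}--\ref{lemma:max_lambdan_n_n} and Proposition~\ref{proposition:max_D_requires_min_lk}, and to show that any spectrum not of the claimed form can be strictly improved by a local three-eigenvalue perturbation. The starting observation is that the objective decouples: writing $a=\sum_{i<R}\lambda_i$, $A=\sum_{i<R}\lambda_i^2$, $b=1-a$, $B=P-A$, we have $D_{HS}=B+\frac{1}{R^2}b^2=(P-A)+\frac{1}{R^2}(1-a)^2$, so $D_{HS}$ depends on the eigenvalues only through the head statistics $(a,A)$, equivalently the tail statistics $(b,B)$. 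Maximizing $D_{HS}$ therefore amounts to pushing $(b,B)$ as far as the ordering constraint $\lambda_0\ge\cdots\ge\lambda_{d-1}\ge0$ and the fixed totals permit, and the content of the lemma is to identify the extremal spectrum.

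First I would fix the head and maximize the tail sum of squares $B$ for the resulting tail sum $b$. Because $x\mapsto x^2$ is convex and each tail eigenvalue is bounded above by the cap $c:=\lambda_{R-1}$, the maximum of $B=\sum_{i\ge R}\lambda_i^2$ at fixed $b$ is attained at a vertex of the polytope $\{\,0\le\lambda_i\le c,\ \sum_{i\ge R}\lambda_i=b\,\}$, where all tail eigenvalues are driven to the endpoints $0$ or $c$ (with at most one fractional exception). This produces the tail block $\lambda_R=\cdots=\lambda_{R+J-2}=c$ together with $\lambda_{R+J-1}=\cdots=\lambda_{d-1}=0$; counting the $J-1$ equal entries gives $b=(J-1)c$ and $B=(J-1)c^2$, i.e. $B=b^2/(J-1)$, which fixes the integer $J$ exactly as in Lemma~\ref{lemma:min_lambda_0_n}. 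Equivalently, the same conclusion follows without convexity by applying Proposition~\ref{proposition:max_D_requires_min_lk} to the straddling triple $(\lambda_{R-1},\lambda_R,\lambda_{R+1})$: maximizing $D_{HS}$ forces $\lambda_{R+1}$ to its minimum, so by Lemma~\ref{lemma:min_lambda_i} either $\lambda_{R+1}=0$ or $\lambda_{R-1}=\lambda_R$, and iterating over successive triples collapses the tail into a block at the cap with the rest zero.

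Next I would optimize the head. Since, for a given tail sum, the attainable $B=b^2/(J-1)$ grows with the cap $c=\lambda_{R-1}$, increasing $\lambda_{R-1}$ increases $D_{HS}$, so at the optimum $\lambda_{R-1}$ is as large as the head totals $(a,A)$ allow. By Lemma~\ref{lemma:max_lambdan_n_n}, maximizing the smallest eigenvalue of the head group requires $\lambda_1=\cdots=\lambda_{R-1}$, leaving only $\lambda_0$ possibly larger. Combining with the tail block, the full spectrum is $\lambda_0\ge\lambda_1=\cdots=\lambda_{R-1}=\lambda_R=\cdots=\lambda_{R+J-2}=c$ followed by zeros, so the boundary eigenvalues coincide, $\max\lambda_{R-1}=\min\lambda_R$, and the middle $R+J-2$ entries are all equal, which is exactly the claim.

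The step I expect to be the main obstacle is the head--tail coupling in the two middle paragraphs: raising the cap $c$ consumes head budget in both $a$ and $A$, while spreading the tail to the cap simultaneously moves $(b,B)$, so I must verify that the two equalizations can be carried out together along a path on which $D_{HS}$ is monotonically nondecreasing, rather than trading one gain for another loss. The cleanest route is to adopt the block ansatz and reduce to the two scalar unknowns $\lambda_0$ and $c$ tied by $\lambda_0+(R+J-2)c=1$ and $\lambda_0^2+(R+J-2)c^2=P$, turning $D_{HS}$ into an explicit function of the single integer $J$ to be optimized, and then to confirm that any admissible distinct intermediate value can be removed by a Proposition~\ref{proposition:max_D_requires_min_lk}-type perturbation without decreasing $D_{HS}$. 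A secondary subtlety is integrality: when $b/c$ is not an integer the extremal tail carries one leftover entry strictly between $0$ and $c$, so the clean ``all equal'' statement holds exactly only in the matched case, and the bookkeeping of $J$ is best read through the characterization $\tfrac{b^2}{J}\le B\le\tfrac{b^2}{J-1}$ of Lemma~\ref{lemma:min_lambda_0_n}.
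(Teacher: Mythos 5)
There is a genuine gap, and it sits exactly where you yourself flagged it: the head--tail coupling. Your two main steps are not legitimate moves under the hard constraints $a+b=1$, $A+B=\mathcal{P}$. Once the head is fixed, both $b=1-a$ \emph{and} $B=\mathcal{P}-A$ are pinned, so ``fix the head and maximize the tail sum of squares $B$ at fixed $b$'' optimizes nothing: the vertex configuration your convexity argument produces will in general have the wrong sum of squares (violating the purity constraint), while among tails with the correct $(b,B)$ the objective $D_{HS}=B+b^2/R^2$ is constant, so no tail-only rearrangement can ever be preferred on objective grounds. Symmetrically, ``increasing $\lambda_{R-1}$ increases $D_{HS}$'' is false: $D_{HS}$ is invariant under any head-internal rearrangement preserving $(a,A)$, and the cap $c=\lambda_{R-1}$ affects only the feasibility of the ordering, never the value of the objective. (Your use of Lemma~\ref{lemma:max_lambdan_n_n} can at best be salvaged as a without-loss-of-generality move --- equalizing the head preserves $D_{HS}$ and raises the cap, hence preserves feasibility --- but that is a different justification from the one you give, and it still says nothing about the tail or the boundary.)

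Consequently the actual content of the lemma, the equality across the boundary $\max\lambda_{R-1}=\min\lambda_{R}$ (equivalently $\lambda_{1}=\dots=\lambda_{R+J-2}$), is never established in your plan: since tail-only and head-only perturbations leave $D_{HS}$ unchanged, that equality can only come from perturbations that \emph{mix} head and tail eigenvalues, and this is precisely what the paper's proof does. For the equalization it takes the triple $(\lambda_0,\lambda_1,\lambda_{R+J-2})$ --- two head entries, one tail entry --- notes that the restricted objective is increasing in the single tail variable $\lambda_{R+J-2}$, and invokes Lemma~\ref{lemma:max_lambda_k} to force $\lambda_1=\lambda_{R+J-2}$; for killing deep-tail entries it takes $(\lambda_1,\lambda_{R+J-2},\lambda_{R+J})$ --- one head, two tail --- rewrites $D_{HS}$ in the form of Proposition~\ref{proposition:max_D_requires_min_lk} with $u=2/(R^2+1)$, $v=h$, and concludes $\lambda_{R+J}=0$ via Lemma~\ref{lemma:min_lambda_i}. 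Your parenthetical alternative (the straddling triple $(\lambda_{R-1},\lambda_R,\lambda_{R+1})$ with Proposition~\ref{proposition:max_D_requires_min_lk} and Lemma~\ref{lemma:min_lambda_i}) is the germ of the second of these arguments, but you leave the iteration unexecuted and never supply the analogue of the first, two-head-one-tail, argument; your closing paragraph explicitly defers it as ``the main obstacle.'' As written, the proposal recovers the correct extremal form only by assuming the decoupling that the lemma is actually about.
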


\begin{proof}
  If there exists $\lambda_{i}\ne \lambda_{j}$, $1 \ge i\ge R+J-2$, $1 \ge j\ge R+J-2$ to obtain larger $D_{HS}$, the at least one of the following two cases is satisfied:
\begin{itemize}
  \item $\lambda_{1} > \lambda_{R+J-2}$ that reaches larger $D_{HS}$.
  \item $\lambda_{R+1}>0$ that reaches larger $D_{HS}$.
\end{itemize}

The contradiction of the first case can be revealed by considering the subproblem of $\lambda_0$, $\lambda_{1}$, and $\lambda_{R+J-2}$, while fixing other values and constraints. Note that $D_{HS} = \lambda_{R+J-2}^2 + \frac{1}{R^2}(\sum_{i=R}^{d-1}\lambda_i\lambda_{R+J-2}) + D_c$, where $D_c$ is a constant. Then, we have the derivative $\partial D_{HS}/\partial \lambda_{R+J-2} > 0$, which indicate that $D_{HS}$ is increasing w.r.t. $\lambda_{R+J-2}$. Hence, the subproblem are equavalent to 
\begin{align}
  \max~ & \lambda_{R+J-2}, \\
  s.t.~& \lambda_{0}^2 +\lambda_{1}^2 + \lambda_{R+J-2}^2 = A^\prime, \\
  & \lambda_{i} + \lambda_{j} + \lambda_{k} = a^\prime, \\
  & a\geq \lambda_{i} \geq \lambda_{j} \geq \lambda_{k} > 0.
\end{align}
By Lemma~\ref{lemma:max_lambda_k}, we have $\lambda_{1} = \lambda_{R+J-2}$.

For the second case, we consider the subproblem of $\lambda_{1}$, $\lambda_{R+J-2}$, and $\lambda_{R+J}$, $\lambda_{R+J} > 0$, while fixing other values and constraints. Then we have
\begin{align}
  D_{HS} &= \sum_{i=R}^{d-1} \lambda_{i}^{2}+\frac{1}{R^2}\left( \sum_{i=R}^{d-1} \lambda_{i}\right)^2\\
  &=\left( \frac{R^2+1}{R^2} \right)[\lambda_{R+J-2}^2 + \lambda_{R+J}^2 + \frac{2}{R^2+1} (\lambda_{R+J-2}\lambda_{R+J} + h\lambda_{R+J-2} + h\lambda_{R+J})] + D_{c},
\end{align}
where $h=\sum_{j\in\{R,\dots,d-1\}\setminus\{R+J-2,R+J\}}\lambda_{j}$, $D_{c}$ is a constant. Maximizing $D_{HS}$ is equavalent to maximizing
\begin{align}
  D^\prime_{Hs} = \lambda_{R+J-2}^2 + \lambda_{R+J}^2 + \frac{2}{R^2+1} (\lambda_{R+J-2}\lambda_{R+J} + h\lambda_{R+J-2} + h\lambda_{R+J}).
\end{align}
Note that $0 < \frac{2}{R^2+1} \le 1$, $0< h < 1$. Then, based on the Proposition~\ref{proposition:max_D_requires_min_lk}, maximizing $D^\prime_{HS}$ requires minimizing $\lambda_{R+J}$. Therefore, according to the proof of Lemma~\ref{lemma:min_lambda_i}, we have $\lambda_{R+J} = 0$. The contradiction has been revealed.
\end{proof}

Then we have the theorem about the upper bound of Hilber-Schmidt distance of the quantum-low rank approximation with known purity.

\begin{theorem}\label{thm:bound_quantum_lr_ap}
  Suppose $\mathcal{P}=\Tr (\rho^2)$ is the purity of the quantum state $\rho$, there exist a quantum state $\sigma$ with rank $R$ such that the Hilbert-Schmidt distance
  \begin{align}
    D_{HS}(\sigma,\rho) \le \left\{\begin{aligned}
      &\mathcal{P}-R\beta_{+}^2(K) + \frac{1}{R^2}\left( 1-R \beta_{+}(K)\right)^2, &1 \leq R < K-1,\\
      &\max_{l\ge R}~ (l-R)\left[1+\frac{l-R}{R^2}\right] \beta^2_{-}(l),  &K \leq R < d_{\rho},
    \end{aligned}\right.
  \end{align}
  where $\beta_{\pm}(x)=\frac{1}{x} \pm \sqrt{ \frac{1}{x(x-1)}\left( \mathcal{P}-\frac{1}{x} \right) }$, $K = \lceil \frac{1}{P} \rceil$, and $d_{\rho}$ is the dimension of $\rho$.
\end{theorem}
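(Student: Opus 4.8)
The plan is to read the bound as a worst-case statement. Since $\sigma$ is the \emph{optimal} rank-$R$ compilation of $\rho$, its Hilbert--Schmidt distance is given exactly by \eqref{eq:DHS_lambda}, so it suffices to maximize \eqref{eq:DHS_lambda} over every spectrum $\{\lambda_i\}$ compatible with normalization $\sum_i\lambda_i=1$ and purity $\sum_i\lambda_i^2=\mathcal{P}$. This is exactly the reformulated program $\max D=B+\frac{1}{R^2}b^2$ subject to $A+B=\mathcal{P}$ and $a+b=1$ stated above. The first thing I would record is that $D$ sees the spectrum only through the head pair $(A,a)$, because $D=\mathcal{P}-A+\frac{1}{R^2}(1-a)^2$; hence the maximizer is whatever feasible spectrum drives both $A$ and $a$ as low as the ordering and purity constraints allow.

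Next I would fix the shape of the optimizer with the exchange lemmas. The structural lemma just above, together with Lemmas~\ref{lemma:min_lambda_0_n} and~\ref{lemma:max_lambdan_n_n}, forces any optimal spectrum into a flat-block (staircase) form: at most one distinguished eigenvalue, a block of mutually equal eigenvalues, and a block of zeros. The controlling quantity is the minimal support size consistent with purity $\mathcal{P}$, namely $K=\lceil 1/\mathcal{P}\rceil$, since a spectrum supported on $m$ levels obeys $\mathcal{P}\ge 1/m$; this is precisely where the two cases separate. The geometric reason the structure flips is that the maximally flat spectrum has only $K$ nonzero eigenvalues: when $R<K-1$ the tail is still nonempty and flattening the head lowers $A$ and $a$, whereas once $R\ge K$ the flat optimum has an empty tail and gives $D=0$, so the maximizer must instead devote a single spike to carrying the purity and spread a long flat tail past index $R$.

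For $1\le R<K-1$ I would therefore take the maximally flat spectrum. Lemma~\ref{lemma:min_lambda_0_n} with $b=1$, $B=\mathcal{P}$, $J=K$ makes the $K-1$ largest eigenvalues coincide at $\beta_+(K)$, and since $R\le K-2$ the head $\lambda_0,\dots,\lambda_{R-1}$ sits entirely inside this block, so $a=R\beta_+(K)$ and $A=R\beta_+^2(K)$; substituting into $D=\mathcal{P}-A+\frac{1}{R^2}(1-a)^2$ gives the first bound. For $K\le R<d_\rho$ I would instead apply Lemma~\ref{lemma:max_lambdan_n_n}: the optimizer is a single spike $\lambda_0$ together with $l-1$ equal eigenvalues at $\beta_-(l)$ over a support of size $l$, so the tail indices $R,\dots,l-1$ are all $\beta_-(l)$, giving $b=(l-R)\beta_-(l)$, $B=(l-R)\beta_-^2(l)$, and $D=(l-R)\left[1+\frac{l-R}{R^2}\right]\beta_-^2(l)$; taking the maximum over admissible support sizes $l\ge R$ produces the second bound.

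The step I expect to be hardest is the passage from local to global optimality. The lemmas only forbid improving perturbations inside two- and three-element subproblems, so the real work is assembling these pairwise exchange arguments into a proof that the \emph{global} maximizer has the single flat-block form, and that the crossover near $R=K-1$ is exactly where the head-flattening optimum (governed by $\beta_+$) yields to the spike-plus-flat-tail optimum (governed by $\beta_-$). A secondary obstacle is that the second case retains a genuine one-parameter optimization over the support size $l$, which does not close in elementary form and must be left as the explicit $\max_{l\ge R}$.
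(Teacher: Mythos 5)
Your proposal is correct and follows essentially the same route as the paper: the same reformulation $\max D = B + \frac{1}{R^2}b^2$ under fixed trace and purity, the same flat-block extremal spectra forced by the exchange lemmas, and the same case split at $K$, yielding the $\beta_+$ flat spectrum with $a=R\beta_+(K)$, $A=R\beta_+^2(K)$ for $1\le R<K-1$ and the $\beta_-$ spike-plus-flat-tail with the residual $\max_{l\ge R}$ for $K\le R<d_\rho$. The only cosmetic difference is that you invoke Lemma~\ref{lemma:min_lambda_0_n} directly in the first case where the paper re-runs the three-eigenvalue contradiction argument via Proposition~\ref{proposition:max_D_requires_min_lk} and Lemma~\ref{lemma:min_lambda_i}; the local-to-global assembly step you flag as hardest is treated just as informally in the paper's own proof.
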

\begin{proof}
  When $1 \leq R < K-1$, $A/a^2 = 1/R$ can be reached. Suppose we always have the solution in the form as
\begin{align}
  \lambda_{0} = \dots = \lambda_{K -2}, \lambda_{K-1}, \lambda_{K} =\dots = \lambda_{d-1} = 0.
\end{align}
Otherwise, there must exist $\lambda_0$, $\lambda_K-2$, and $\lambda_K-1$, $\lambda_0>\lambda_{K-2}$ and $\lambda_{K-1}\ne 0$ to maximize $D_{HS}$, while fixing other values and constraints. Then, we can construct a subproblem w.r.t. $\lambda_0$, $\lambda_K-2$, and $\lambda_{K-1}$, which indicate that maximizing $D_{HS}$ requires minimizing $\lambda_{K-1}$ based on the Proposition~\ref{proposition:max_D_requires_min_lk}. According to Lemma~\ref{lemma:min_lambda_i}, we have $\lambda_{K-1} = 0$ or $\lambda_0 = \lambda_{K-2}$. Therefore, the contradiction has been revealed. The result of $\lambda_i$, $i=0,\dots,K-1$ can be determined by solving the system
\begin{align}
  (K-1)\lambda_0 + \lambda_{k-1} = 1,\\
  (K-1)\lambda_0^2 + \lambda_{k-1}^2 = \mathcal{P}.
\end{align}
Therefore, we have the optimal solution
\begin{align}
  \lambda_{0}=\lambda_{1}=\dots=\lambda_{K-2}=\beta_{+}(K),\\
  \lambda_{K-1} = 1-(K-1)\beta_{+}(K),\\
  \lambda_{K}=\dots=\lambda_{n-1}=0.
\end{align}
In this case, the Hilbert-Schmidt distance reads
\begin{align}
  D_{HS} = \mathcal{P}-R\beta_{+}^2(K) + \frac{1}{R^2}\left( 1-R \beta_{+}(K)\right)^2.
\end{align}

When $K \leq R < d$, suppose we always have the solution in the form as
\begin{align}
  \lambda_{0}\ge \lambda_{1} = \dots = \lambda_{l-1},
\end{align}
where $\lambda_{l} = 0$ or $l = d$.
Otherwise, there must exist $\lambda_0$, $\lambda_1$, and $\lambda_{l-1}$, $\lambda_1>\lambda_{l-1}$ to maximize $D_{HS}$, while fixing other values and constraints. Then, we can construct a subproblem w.r.t. $\lambda_0$, $\lambda_1$, and $\lambda_{l-1}$, which indicate that maximizing $D_{HS}$ requires maximizing $\lambda_{l-1}$. According to Lemma~\ref{lemma:max_lambda_k}, we have $\lambda_1 = \lambda_{l-1}$. Therefore, the contradiction has been revealed. The result of $\lambda_i$, $i=0,\dots,K-1$ can be determined by solving the system 
\begin{align}
  \lambda_0 + (l-1)\lambda_{l-1} = 1,\\
  \lambda_0^2 + (l-1)\lambda_{l-1}^2 = \mathcal{P}.
\end{align}
Therefore, we have the optimal solution
\begin{align}
  \lambda_{0} = 1-(l-1)\beta_{-}(l),\\
  \lambda_{1}=\dots=\lambda_{l-1}=\beta_{-}(l),\\
  \lambda_{l}=\dots=\lambda_{d-1}=0.
\end{align}
In this case, the Hilbert-Schmidt distance reads
\begin{align}
  D_{HS} = (l-R)\left[1+\frac{l-R}{R^2}\right] \beta^2_{-}(l).
\end{align}
\end{proof}

From \cite{bisio2011Minimal}, the minimal dimension of ancillary space $\mathcal{H}_{A_N}$ is the size of $\mathrm{Supp}(\Upsilon)$. In other word, the there exist a list of isometries $d_{A_1} =\dots = d_{A_N} = R$ that implement the Choi state $\Upsilon^\prime$ of the quantum comb $\mathcal{C}$, where $\mathrm{rank}(\Upsilon^\prime) = R$. In this circumstance, the number of entries of isometries is $O(\max_{0\leq t< N}(d_{\mathsf{i}_{t}}d_{\mathsf{o_{t}}}) R^2)$. Based on the Theorem~\ref{thm:bound_quantum_lr_ap}, considering the normalizing factor, we have that
\begin{align}
  D_{HS}(\Upsilon^\prime,\Upsilon) \le \left\{\begin{aligned}
    &(\Tr\Upsilon)^2[\mathcal{P}-R\beta_{+}^2(K) + \frac{1}{R^2}\left( 1-R \beta_{+}(K)\right)^2], &1 \leq R < K-1,\\
    &\max_{l\ge R} (\Tr\Upsilon)^2(l-R)\left[1+\frac{l-R}{R^2}\right] \beta^2_{-}(l)  &K \leq R < d.
  \end{aligned}\right.
\end{align}
The Theorem~\ref{thm:bound} has been proved.

\begin{figure}[t]
  \includegraphics[width=0.8\textwidth]{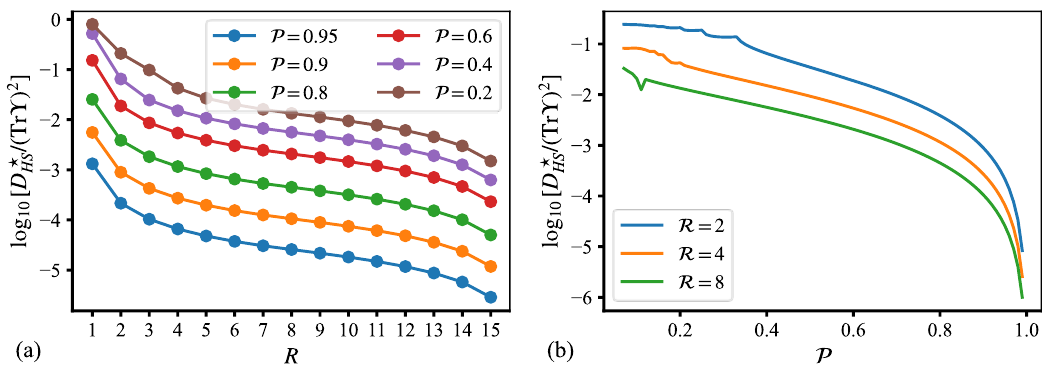}
  \caption{\label{fig:purity_dist} Upperbound of Hilbert-Schmidt distance in Theorem~\ref{thm:bound} to implement a 16-dimensional Choi state.}
\end{figure}

We showcase the upperbound of the Hilbert-Schmidt distance $D_{HS}^{\star}$ in Fig.~\ref{fig:purity_dist}. The dimension of target Choi state of the quantum comb is set as $16$. Note that $D_{HS}^\star$ is not smooth and non-increasing w.r.t. the purity while R is specified. This phenomenon may results from the non-smooth eigen values and the transition of the existance of the real square root in $\beta_{\pm}$.

\section*{Tangent Space, Velocity, and Retractions on the Stiefel manifold}

Stiefel manifold $\mathrm{St}(n,p)$ is an embedded submanifold of $\mathbb{C}^{n\times p}$ ($\mathbb{R}^{n\times p}$ for real cases), which is defined as the set of $p$ orthonormal vectors in $\mathbb{C}^{n}$ \cite{boumal2023Introduction}. An element in $\mathrm{St}(n,p)$ can be represented as a complex matrix $X\in\mathbb{C}^{n\times p}$ such that $X^{\dagger} X = I$.

To optimize the cost function defined in (4) in the main text with the orthonormal constraints, we want to
\begin{itemize}
  \item initialize the point $X$ on the Stiefel manifold,
  \item for each point $X$ on the Stiefel manifold, find a velocity $V$ that decreases the cost function,
  \item and move the point along $V$ on the Stiefel manifold.
\end{itemize}
The first tip can be trivially implemented by randomly choosing an orthonormal matrix. The second and third tips require that find a direction on the tangent space of $\mathrm{St}(n,p)$ at $X$, and choose a retraction $R$ to move smoothly on the Stiefel manifold such that $R({0})\vert_X = X$ and $R^\prime({0})\vert_X = V$.

The tangent space $\mathcal{T}_X \mathcal{M}$ of a manifold $\mathcal{M}$ is the linear space of derivatives of all smooth curves $R(t)$ on the manifold at point $X$,
\begin{align}
  \mathcal{T}_X \mathcal{M} = \{R^\prime(0)\vert R: \mathcal{I} \to \mathcal{M} \mbox{ is smooth and } R(0) = X\},
\end{align}
where $\mathcal{I}$ is any open interval containing $t=0$. That is, $Z$ is in $\mathcal{T}_X \mathcal{M}$ if and only if there exists a smooth curve on $\mathcal{M}$ passing through $X$ with velocity $Z$. Hence, the tangent space of the Stiefel manifold $\mathrm{St}(n,p)$ at $X\in\mathrm{St}(n,p)$ can be described as 
\begin{align}
  \mathcal{T}_X \mathrm{St}(n,p) = \{Z \vert Z^\dagger X + X^\dagger Z = 0\}.
\end{align}

The velocity $V$ at $X\in\mathrm{St}(n,p)$ should be projected onto the tangent space $\mathcal{T}_X \mathrm{St}(n,p)$ to move the point along $V$ on the Stiefel manifold. Based on the canonical inner product \cite{tagare2011Notes},
\begin{align}\label{eq:canonical_innerprod}
  \langle Z_1, Z_2 \rangle_c = \Tr[Z_1^\dagger(I-\frac{1}{2}XX^\dagger)Z_2],
\end{align}
the velocity $V$ is projected as $U = DX$, where 
\begin{align}
  D = VX^\dagger - XV^\dagger.
\end{align}
An example of the velocity projection is the Riemannian gradient of $f:\mathrm{St}(n,p)\to\mathbb{R}$ at $X$, 
\begin{align}
  U = G - X G^\dagger X,
\end{align}
where $G = \frac{\partial f}{\partial X^*}$.

To move the point along $V$ on the Stiefel manifold, a retraction $R$ that continuously wraps the tangent space to the manifold using a curve $C(X,V)$ on the manifold,
\begin{align}
  R:\mathcal{T}_X \mathrm{St}(n,p)\to \mathrm{St}(n,p): (X,V) \to C(X,V),
\end{align}
where $R(t) = C(X, tV)$ that satisfies $R(0) = X$, and $R^\prime(0) = V$. In this letter, we utilizes the Cayley retraction that 
\begin{align}
  R(t) = C(X,D) = (I+t\frac{D}{2})^{-1}(I-t\frac{D}{2})X,
\end{align}
where $U = DX$ is the projection of the velocity $V$ on the tangent space $\mathcal{T}_X \mathrm{St}(n,p)$.

\section*{Solving unconstrained problem on Stiefel manifold via ADAM}
Adaptive moment estimation (ADAM) is a prevalent first-order optimization method for stochastic differentiable scalar cost functions $f(\bm{\theta})$ w.r.t. parameters $\bm{\theta}$ \cite{kingma2014Adam}. The main goal of ADAM is to minimize the expected value of the cost function $\mathbb{E}[f(\bm{\theta})]$. The algorithm estimates the first and second moment $\bm{m}$ and $v$ as the exponential moving averages of the gradient and squared gradient, respectively, where hyper-parameters $\gamma_1, \gamma_2\in[0,1)$ control the exponential decay rates of these moments.

At the beginning of ADAM, moments $\bm{m}$ and $v$ are initialized as $\bm{0}$ and $1$, respectively, which leads to the biased moment estimation. Therefore, bias correction is required at each iteration,
\begin{align}\label{eq:bias_correction}
  \hat{\bm{m}} = \frac{\bm{m}}{1-\gamma_1^t},~\hat{v} =\frac{v}{1-\gamma_2^t} ,
\end{align}
where $t$ is the current iteration. Based on the bias-corrected moments, parameters are updated by
\begin{align}
  \bm{\theta} = \bm{\theta} - \kappa \frac{\hat{\bm{m}}}{\sqrt{\hat{v}}},
\end{align}
where $\kappa = \min(\kappa_0, 1/\|\frac{\hat{\bm{m}}}{\sqrt{\hat{v}}}\|)$ is the adaptive learning rate, and $\kappa_0$ is a hyperparameter that limits the maximum learning rate. 

Moving onto the Stiefel manifold, we consider the cost function $f: \mathrm{St} \to \mathbb{R}$ w.r.t. $X\in \mathrm{St}$. At iteration $t$, we update the biased first moment $M$ and second moment $v$ by
\begin{align}
  M &\leftarrow \gamma_1 M + (1-\gamma_1) \frac{\partial f}{\partial X^*},\\
  v &\leftarrow \gamma_2 v + (1 - \gamma_2)\|\frac{\partial f}{\partial X^*}\|_{F}^2.
\end{align}
Then, we compute the bias-corrected first moment $\hat{M}$ and second moment $\hat{v}$ by \eqref{eq:bias_correction}. Finally, the parameter matrix $X$ is updated by the Cayley retraction with the velocity determined by projecting bias-corrected moments onto $\mathcal{T}_X \mathrm{St}$,
\begin{align}
  X \leftarrow (I+\kappa\frac{{D}}{2})^{-1}(I-\kappa\frac{{D}}{2})X,
\end{align}
where 
\begin{align}
  D = \frac{1}{\sqrt{\hat{v}}}(\hat{M}X^\dagger - X\hat{M}^\dagger),
\end{align}
and $\kappa=\min\{\kappa_0,1/(\|D\|_F+\epsilon)\}$.

We summarize the Stiefel ADAM in Algorithm~\ref{alg:adam_stiefel}, where $\gamma_1$, $\gamma_2$, $\kappa_0$, $\delta$, and $\epsilon$ are hyperparameters. Note that the hyperparameter $\epsilon > 0$ is introduced to avoid the $0$ denominator. In this letter, we empirically set $\gamma_1 = 0.9$, $\gamma_2 = 0.999$, and $\epsilon=10^{-8}$, while $\kappa_0$ and $\delta$ are set that varies from experiments.

\begin{algorithm}[h]
  \caption{ADAM on the Stiefel manifold}\label{alg:adam_stiefel}
  \KwIn{Initial parameter $X_{0}$ on the Stiefel manifold}
  $X \leftarrow X_{0}$, $M \leftarrow \bm{0}$, $v \leftarrow 1$; \tcp*[f]{Initialize parameter and first and second moments}\\
  \For{$t = 1$ \KwTo $T$}{
    $M \leftarrow \gamma_1 M + (1-\gamma_1) \frac{\partial f}{\partial X^*}$;\tcp*[f]{Update biased first moment}\\
    $v \leftarrow \gamma_2 v + (1 - \gamma_2)\|\frac{\partial f}{\partial X^*}\|_{F}^2$; \tcp*[f]{Update biased second moment}\\
    $r\leftarrow(1-\gamma_1^t)\sqrt{v_t/(1-\gamma_2^t)+\epsilon}$;\tcp*[f]{Estimate biased-corrected ratio}\\
    $D \leftarrow \frac{1}{r}({M}X^\dagger - X{M}^\dagger)$;\tcp*[f]{Project onto the tangent space}\\ 
    $\kappa\leftarrow\min\{\kappa_0,1/(\|D\|_F+\epsilon)\}$ ;\tcp*[f]{Select adaptive learning rate}\\
    $X \leftarrow (I+\kappa\frac{D}{2})^{-1}(I-\kappa\frac{D}{2})X$;\tcp*[f]{Update $X$ by Cayley retraction}\\
    \If{$\|G X^\dagger - XG^\dagger\|_F<\delta$}{
      $\mathbf{break}$\\
    }
  }
  \Return{$X$.}
  
\end{algorithm}

\section*{Isometry based Quantum Comb Tomography}

\begin{algorithm}[h]
  \caption{\label{al:iqct}Framework of iQCT}
  \KwIn{Complete sets $\{\Gamma^{(k)}\}_{k=0}^{N-1}$ and $\{\Xi^{(k)}\}_{k=0}^{N-1}$, and experiment data $\{s^{(k)}_{\bm{\alpha},\bm{\beta}}\}_{k,\bm{\alpha},\bm{\beta}}$}
  \Begin{
    $\{\eta^{(-1)}_{\bm{\alpha}, \bm{\beta}}\}\leftarrow\{\rho^{(0)}_{\alpha_0}\}$; \tcp*[f]{Initialize temporary states}\\
    $V^{(0)}\leftarrow \arg\min_{W^{(0)}\in\mathrm{St}^{(0)}}\mathcal{F}(W^{(0)})$;\\
    $\mathbb{V}\leftarrow \{V^{(0)}\}$;\tcp*[f]{Result set of isometries}\\
    \For{$k = 1, \dots, N-1$}{
      \{$\eta^{(k-1)}_{\bm{\alpha}, \bm{\beta}}\}\leftarrow \mathbf{GetTempState}(\{\Tr_{\mathtt{o}_{k-1}} [\rho^{(k)}_{\alpha_{k}} E^{{(k-1)}}_{\beta_t}V^{(k-1)}\eta^{(k-2)}_{\bm{\alpha},\bm{\beta}}V^{(k-1)\dagger}]\})$;\\
      $V^{(k)}\leftarrow \arg\min_{W^{(k)}\in\mathrm{St}^{(k)}}\mathcal{F}(W^{(k)})$;\\
      $\mathbb{V}\leftarrow \mathbb{V} \cup \{V^{(k)}\}$;\\
    }
  }
  \Return{$\mathbb{V}$.}
\end{algorithm}

The isometry based quantum comb tomography determines isometries $[V^{(0)}, \dots, V^{(N-1)}]$ that completely represent the target $N$-time-step quantum comb. The experimenter prepares
\begin{itemize}
  \item[(1)] complete sets of input states $\Gamma^{(k)}:=\{\rho^{(k)}_{i}\in\mathrm{Lin}(\mathcal{H}_{\mathtt{i}_k})\}_{i=0}^{d_{2k}^2-1}$ for input systems, $k = 0,1,\dots,N-1$,
  \item[(2)] complete sets of measurements $\Xi^{(k)}:=\{E^{(k)}_{j} \in \mathrm{Lin}(\mathcal{H}_{\mathtt{o}_k})\}_{j=0}^{d_{2k+1}^2-1}$ for output systems, $k = 0,1,\dots,N-1$,
\end{itemize}
where each complete set of input states $\Gamma^{(k)}$ consists of $d_{\mathtt{i}_k}^2$ linear independent quantum states, and each complete set of measurements $\Xi^{(k)}$ consists of $d_{\mathtt{o}_k}^2$ linear independent POVM operators. These input states and measurements are mathematically known to the experimenter.

Then, the stepwise optimization is performed to determine isometries. At time step $k$, only isometry $V^{(k)}$ is determined with known $V^{(t)}$, $t<k$. Let $\bm{\alpha} := [\alpha_0, \dots, \alpha_k]$ and $\bm{\beta} := [\beta_0, \dots, \beta_k]$ be two lists of indexes representing which input states and measurements are utilized, respectively. Specifically, input states and measurements of an experiment labeled by $\bm{\alpha}$ and $\bm{\beta}$ are $\{\rho^{(0)}_{\alpha_0},\dots,\rho^{(k)}_{\alpha_k}\}$ and $\{E^{(0)}_{\beta_0},\dots,E^{(k)}_{\beta_k}\}$, respectively. The experimenter conducts experiments labeled by $\bm{\alpha}$ and $\bm{\beta}$ and records the results $s^{(k)}_{\bm{\alpha},\bm{\beta}}$. The criteria for selecting $\bm{\alpha}$ and $\bm{\beta}$ are that $\{\eta^{(k-1)}_{\bm{\alpha},\bm{\beta}}\}$ should consist of at least $d^2_{\mathtt{i}_k}d^2_{A_k}$ linear independent matrices and that $\beta_k$ spans $\{0,\dots,d^2_{\mathtt{o}_k} -1\}$, where $\eta^{(k-1)}_{\bm{\alpha},\bm{\beta}}$ is a temporary state that is determined recursively,
\begin{gather}
  \eta^{(t)}_{\bm{\alpha},\bm{\beta}} = \Tr_{\mathtt{o}_t} [\rho^{(t+1)}_{\alpha_{t+1}} E^{{(t)}}_{\beta_t}V^{(t)}\eta^{(t-1)}_{\bm{\alpha},\bm{\beta}}V^{(t)\dagger}], t \ge 0,
\end{gather}
and $\eta^{(-1)}_{\bm{\alpha},\bm{\beta}} = \rho^{(0)}_{\alpha_0}$. The minimum set of required temporary states can be easily implemented, since we have $\{\eta_{\bm{\alpha}^\prime, \bm{\beta}^{\prime}}^{(k-2)}\}$ at time step $k-1$ to determine $V^{(k-1)}$, by finding the maximum linear independent subset of 
\begin{align}
  \{\Tr_{\mathtt{o}_{k-1}} [\rho^{(k)}_{\alpha_{k}} E^{{(k-1)}}_{\beta_{k-1}}V^{(k-1)}\eta_{\bm{\alpha}^\prime, \bm{\beta}^{\prime}}^{(k-2)}V^{(k-1)\dagger}]\}_{\alpha_k=0,\beta_{k-1}=1}^{d^2_{\mathtt{i}_{k-1}}-1, d^2_{\mathtt{o}_{k-1}}-1}.
\end{align}
Note that the recursive determination of the temporary state only requires matrix multiplications with maximum dimensions $d_{\mathtt{i}_{t+1}}d_{\mathtt{i}_{t}}d_{A_{t}}$. The determined temporary states can be stored into memory to avoid repetitive computation. 

Then, the cost function is composed as
\begin{align}
  \mathcal{F}(W^{(k)}) = \sum_{\bm{\alpha},\bm{\beta}} \vert \tilde{p}_{\bm{\alpha},\bm{\beta}} - \Tr[E^{(k)}_{\beta_{k}}W^{(k)} \eta^{(k-1)}_{\bm{\alpha},\bm{\beta}} W^{(k)\dagger}]\vert^2,
\end{align}
where $\tilde{p}_{\bm{\alpha},\bm{\beta}}=s^{(k)}_{\bm{\alpha},\bm{\beta}}/n_s$ is the measurement probability. Hence, the first derivative of the cost function can be determined by
\begin{align}
  \frac{\partial \mathcal{F}}{\partial W^{(k)*}} = \sum_{\bm{\alpha},\bm{\beta}}2\left(\Tr[E^{(k)}_{\beta_{k}}W^{(k)} \eta^{(k-1)}_{\bm{\alpha},\bm{\beta}} W^{(k)\dagger}] - \tilde{p}_{\bm{\alpha},\bm{\beta}}\right)E^{(k)}_{\beta_{k}}W^{(k)}\eta^{(k-1)}_{\bm{\alpha},\bm{\beta}}.
\end{align}

Then, the isometry $V^{(k)}$ can be reconstructed by optimizing $\mathcal{F}$ with orthonormal constraints
\begin{align}\label{eq:euc_problem}
  \min_{W^{(k)}\in\mathbb{C}^{(k)}}&\mathcal{F}(W^{(k)}),\\
  s.t.~~~ &W^{(k)^\dagger}W^{(k)} = I,
\end{align}
where $\mathbb{C}^{(k)}:=\mathbb{C}^{d_{\mathtt{o}_k}d_{A_{k+1}}\times d_{\mathtt{i}_k}d_{A_k}}$. The constraint $W^{(k)^\dagger}W^{(k)} = I$ indicates that $W^{(k)}$ lies on the Stiefel manifold $\mathrm{St}^{(k)}:=\{X\in \mathbb{C}^{(k)}:X^\dagger X = I\}$. Hence, the orthonormally constrained optimization problem is transformed into an unconstrained problem on the Stiefel manifold 
\begin{align}\label{eq:stiefel_problem}
  \min_{W^{(k)}\in\mathrm{St}^{(k)}}\mathcal{F}(W^{(k)}).
\end{align}
Then, we have $V^{(k)} = \arg\min_{W^{(k)}\in\mathrm{St}^{(k)}}\mathcal{F}(W^{(k)})$ based on the ADAM on the Stiefel manifold. The framework of iQCT is summarized in Algorithm~\ref{al:iqct}, where $\mathbf{GetTempState}$ outputs a required set of temporary states specified by the experimenter.

\section*{Instrument-based iQCT}
The instrument-based iQCT requires a complete set of initial states $\Gamma^{(0)}:=\{\rho^{(0)}_{\alpha}\}_{\alpha=0}^{d^2-1}$ and complete sets of CPTNI instruments
\begin{align}
  \mathcal{J}^{(t)} = \{\mathcal{A}_{x_t}^{(t)}\}_{x_t=0}^{d^4-1},~ t=0,1,\dots, N-1,
\end{align}
where $\mathcal{A}_{x_t}^{(t)}$ transforms the output state from $\mathrm{Lin}(\mathcal{H}_{\mathtt{o}_t})$ to the input state $\mathrm{Lin}(\mathcal{H}_{\mathtt{i}_{t+1}})$. As shown in Fig.~\ref{fig:inst_qct}, the experimenter prepares the initial state $\rho^{(0)}$ and performs instruments in each time slots to obtain measurement data.

\begin{figure}[t]
  \includegraphics[width=0.79\textwidth]{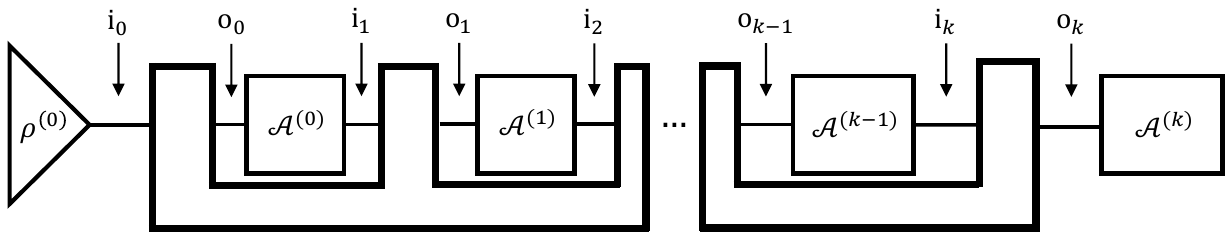}
  \caption{\label{fig:inst_qct} Experiment framework of instrument-based iQCT.}
\end{figure}

The iQCT still applies the stepwise optimization to learning isometries. At step $k$, the isometry $V^{(k)}$ is recovered with known $V^{(t)}$, for $t<k$. The experimenter conducts experiments by performing interventions of a sequence of instruments $[\mathcal{A}_{x_0}^{(0)},\dots,\mathcal{A}_{x_{k}}^{(k)}]$ indexed by $\bm{x}:=[x_0,\dots, x_k]$ on the initial states $\rho_{\alpha}^{(0)}$, and obtains the data $s_{\alpha,\bm{x}}^{(k)}$. The criteria to select $\alpha$ and $\bm{x}$ are that $\{\eta^{(k-1)}_{\alpha,\bm{x}}\}$ consists of at least $d^2d^2_{A_k}$ linear independent matrices, where
\begin{gather}
  \eta^{(t)}_{\alpha,\bm{x}} = \Tr_{\mathtt{o}_t} [\xi_{x_{t}}^{(t)} V^{(t)}\eta^{(t-1)}_{\alpha,\bm{x}}V^{(t)\dagger}], t \ge 0,
\end{gather}
and $\eta^{(-1)}_{\bm{\alpha},\bm{\beta}} = \rho^{(0)}_{\alpha}$, where $\xi_{x_{t}}^{(t)} := \sum_{ij}[\mathcal{A}_{x_t}^{(t)}(\vert i\rangle \langle j\vert)]_{\mathtt{i}_{t+1}}\otimes [\vert j\rangle \langle i\vert]_{\mathtt{o}_{t}}$ such that $\Tr_{\mathtt{o}_t}[\xi_{x_{t}}^{(t)}(I_{\mathtt{i}_{t+1}}\otimes \rho_{\mathtt{o}_t})] = \mathcal{A}_{x_t}^{(t)}(\rho_{\mathtt{o}_t})$. Hence, the recovered probability of the experiemnt is
\begin{gather}
  p_{{\alpha,\bm{x}}} (W^{(k)})=\Tr[\xi_{x_{k}}^{(k)} W^{(k)}\eta^{(k-1)}_{\alpha,\bm{x}}W^{(k)\dagger}].
\end{gather} 
Further processes are the same as regular iQCT.

\section*{Characterizing Non-Markovian Quantum Noise}

A typical quantum device with a $d$-dimension system under non-Markovian quantum noise executes tasks by performing a sequence of completely positive and trace non-increasing (CPTNI) instruments on the initial state. Since the non-Markovian quantum noise can be modeled by the quantum comb, the iQCT is feasible to characterize non-Markovian quantum noise on quantum devices.

However, the instrument-based should be modified before characterizing non-Markovian quantum noise on the noisy intermediate-scale quantum (NISQ) devices due to the their limits. On a typical $d$-dimensional NISQ device, intermediate instruments are completely positive and trace-preserving (CPTP) operations, while the instrument at the last time step is a POVM measurement. Therefore, the instrument-based iQCT requires a complete set of initial states $\Gamma^{(0)}$, complete POVM measurements $\Xi^{(k)}$, $k=0,\dots,N-1$ and complete sets of CPTP operations $\mathcal{J}^{(t)}$, $t=0,\dots,N-2$.

At step $k$, the isometry $V^{(k)}$ is recovered with known $V^{(t)}$, for $t<k$. The experimenter conducts experiments by performing interventions of a sequence of instruments $[\mathcal{A}_{x_0}^{(0)},\dots,\mathcal{A}_{x_{k-1}}^{(k-1)}]$ indexed by $\bm{x}:=[x_0,\dots, x_{k-1}]$ on the initial states $\rho_{\alpha}^{(0)}$. Then, data $s_{\alpha,\bm{x},\beta}^{(k)}$ are measured by performing measurement $E^{(k)}_{\beta}$ on the final output states. The criteria to select $\alpha$, $\bm{x}$, and $\beta$ are that $\{\eta^{(k-1)}_{\alpha,\bm{x}}\}$ consists of at least $d^2d^2_{A_k}$ linear independent matrices and that $\beta$ spans $\{0,\dots,d^2-1\}$, where
\begin{gather}
  \eta^{(t)}_{\alpha,\bm{x}} = \Tr_{\mathtt{o}_t} [\xi_{x_{t}}^{(t)} V^{(t)}\eta^{(t-1)}_{\alpha,\bm{x}}V^{(t)\dagger}], t \ge 0,
\end{gather}
and $\eta^{(-1)}_{\bm{\alpha},\bm{\beta}} = \rho^{(0)}_{\alpha}$. Hence, the recovered probability of the experiemnt is
\begin{gather}
  p_{{\alpha,\bm{x},\beta}} (W^{(k)})=\Tr[E^{(k)}_{\beta}W^{(k)}\eta^{(k-1)}_{\alpha,\bm{x}}W^{(k)\dagger}].
\end{gather} 
Further processes are the same as regular iQCT.

\begin{figure}[t]
  \includegraphics[width=0.79\textwidth]{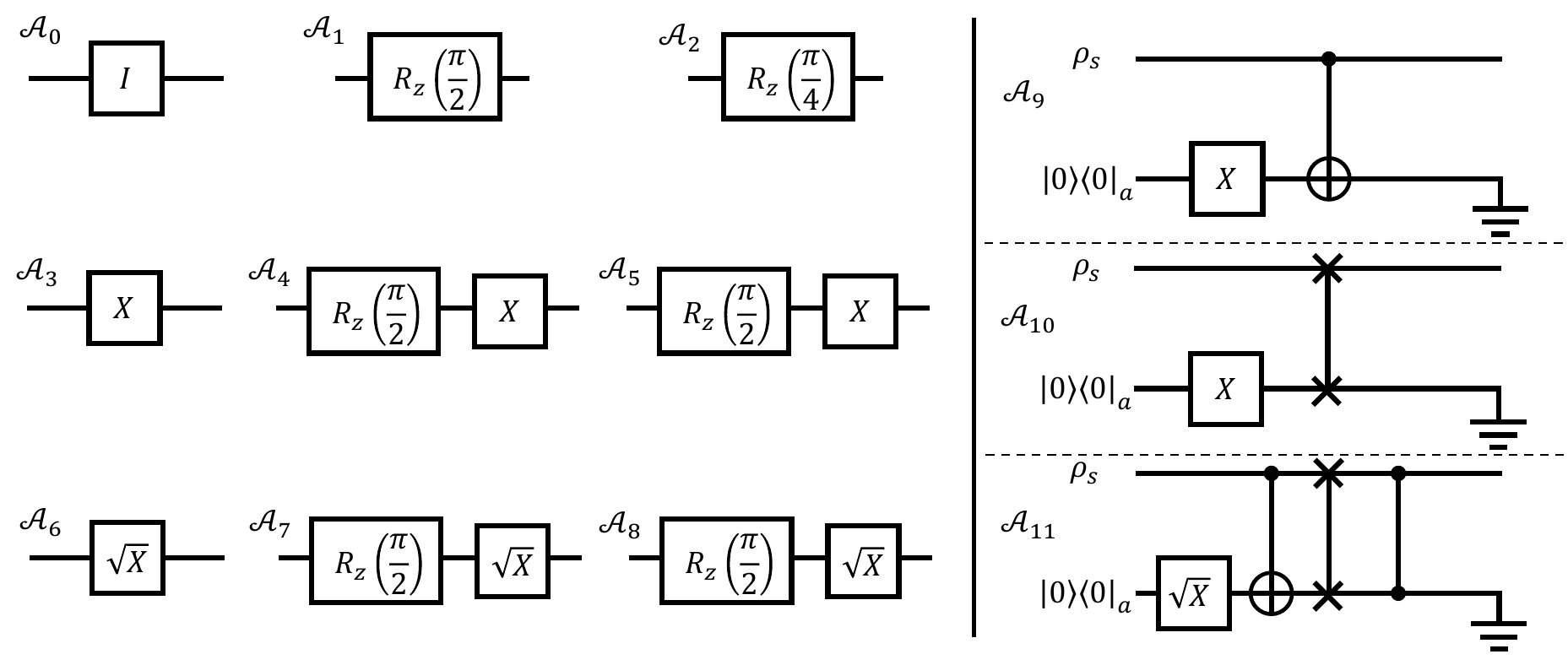}
  \caption{\label{fig:cptp_inst} Single-qubit CPTP instruments for NISQ devices. $\mathcal{A}_0,...,\mathcal{A}_{11}$ span the space of single-qubit CPTP operations, where $\mathcal{A}_0,...,\mathcal{A}_8$ span the unitary space, $\mathcal{A}_9,...,\mathcal{A}_{11}$ are CPTP operations implemented with ancilla qubits with subscript $a$, and $\rho_s$ represents the quantum system.}
\end{figure}

Both simulations and real-device experiments are conducted in the NISQ environment. For simulations, we utilize $\mathcal{A}_0,\dots,\mathcal{A}_{11}$ at intermediate time steps. Initial states are defined as
\begin{align}
  \rho^{(0)}_{0} = 
  \begin{bmatrix}
    0.5& -0.5\\
    -0.5&  0.5
  \end{bmatrix},~\rho^{(0)}_{1} = 
  \begin{bmatrix}
    0.5&  0.5\\
    0.5&  0.5
  \end{bmatrix},~\rho^{(0)}_{2} = 
  \begin{bmatrix}
    0.5&  -0.5i\\
    0.5i&  0.5
  \end{bmatrix},~\rho^{(0)}_{3} = 
  \begin{bmatrix}
    1&  0\\
    0&  0
  \end{bmatrix}.
\end{align} 
POVM operators are similarly defined as
\begin{align}
  E^{(k)}_{0} = 
  \begin{bmatrix}
    0.5& -0.5\\
    -0.5&  0.5
  \end{bmatrix},~E^{(k)}_{1} = 
  \begin{bmatrix}
    0.5&  0.5\\
    0.5&  0.5
  \end{bmatrix},~E^{(k)}_{2} = 
  \begin{bmatrix}
    0.5&  -0.5i\\
    0.5i&  0.5
  \end{bmatrix},~E^{(k)}_{3} = 
  \begin{bmatrix}
    1&  0\\
    0&  0
  \end{bmatrix}.
\end{align}
Simulation results demonstrate similar properties shown by Fig.~3(a) in the main text.

We also apply the iQCT to a single-qubit real superconductive quantum device produced by Yangtze Delta Region Industrial Innovation Center of Quantum and Information Technology, Suzhou. The maximum time step is set as $N=3$. Due to the restriction of the device, we can only guarantee the non-measurement operations that span the unitary space. We utilize $\mathcal{A}_0,\dots,\mathcal{A}_{8}$ at intermediate time steps. Initial states are prepared as
\begin{align}
  \rho^{(0)}_{0} = 
  \begin{bmatrix}
    1& 0\\
    0&  0
  \end{bmatrix},~\rho^{(0)}_{1} = 
  \begin{bmatrix}
    0&  0\\
    0&  1
  \end{bmatrix},~\rho^{(0)}_{2} = 
  \begin{bmatrix}
    0.5&  0.5\\
    0.5&  0.5
  \end{bmatrix},~\rho^{(0)}_{3} = 
  \begin{bmatrix}
    0.5&  -0.5i\\
    0.5i&  0.5
  \end{bmatrix}.
\end{align} 
POVM operators are performed as 
\begin{align}
  E^{(k)}_{0} = 
  \begin{bmatrix}
    1& 0\\
    0&  0
  \end{bmatrix},~E^{(k)}_{1} = 
  \begin{bmatrix}
    0&  0\\
    0&  1
  \end{bmatrix},~E^{(k)}_{2} = 
  \begin{bmatrix}
    0.5&  -0.5i\\
    0.5i&  0.5
  \end{bmatrix},~E^{(k)}_{3} = 
  \begin{bmatrix}
    0.5&  -0.5\\
    -0.5&  0.5
  \end{bmatrix}.
\end{align}
An experiment is conducted as shown in Fig.~\ref{fig:rc_exp_frame}, where the operations are conducted within the same time gap $\tau = 10$ns. Before running the iQCT, we first correct the unitaries and POVM operators by performing gate set tomography in the unitary space. We estimate the non-Markovian quantum noise with full ancillary dimensions as $\mathcal{C}_{\mathrm{full}}$, where $d_{A_1} = 4$, $d_{A_2} = 16$, and $d_{A_3} = 32$. Then, we specify the ancillary dimensions as $[d_{A_1},d_{A_2},d_{A_3}] = [2,2,2]$, $[2,2,4]$, and $[2,4,8]$, respectively, to determine dimension-reduced results $\mathcal{C}_{\mathrm{dr}}$. 

\begin{figure}[h]
  \includegraphics[width=0.79\textwidth]{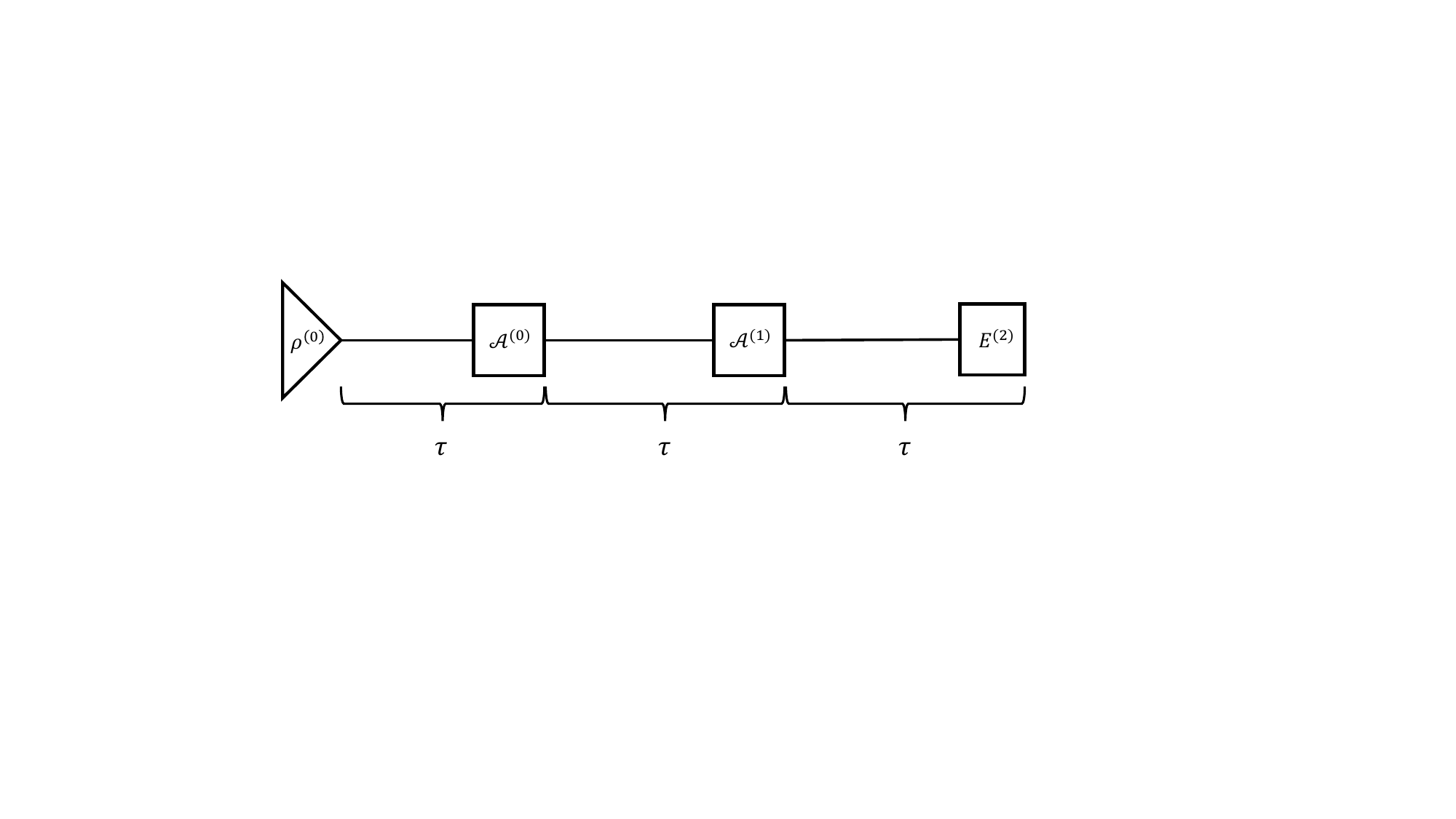}
  \caption{\label{fig:rc_exp_frame} Experiment on the real quantum device. Operations are conducted within the same time gap $\tau$.}
\end{figure}

We utilize the relative cost $\mathcal{L}(\mathcal{C}_{\mathrm{dr}},\mathcal{C}_{\mathrm{full}})$ to measure distances between $\mathcal{C}_{\mathrm{dr}}$ and $\mathcal{C}_{\mathrm{full}}$,
\begin{align}
  \mathcal{L}(\mathcal{C}_{\mathrm{dr}},\mathcal{C}_{\mathrm{full}}) = \sum_{{\alpha,\bm{x},\beta}} \vert p_{{\alpha,\bm{x},\beta}}(\mathcal{C}_{\mathrm{dr}}) - p_{{\alpha,\bm{x},\beta}}(\mathcal{C}_{\mathrm{full}})\vert^2,
\end{align}
where $p_{{\alpha,\bm{x},\beta}}(\bullet)$ is the recovered probability of the operand quantum comb, $\mathcal{A}^{(k)}_{i}\in\{\mathcal{A}_0,\dots,\mathcal{A}_{8}\}$ span the unitary space for all $k$. The relative cost measures the distance between two operand quantum combs where non-measurement operations are defined in the unitary spaces.

\bibliography{citations}